\def\@ACM@checkaffil{
    \if@ACM@instpresent\else
    \ClassWarningNoLine{\@classname}{No institution present for an affiliation}%
    \fi
    \if@ACM@citypresent\else
    \ClassWarningNoLine{\@classname}{No city present for an affiliation}%
    \fi
    \if@ACM@countrypresent\else
        \ClassWarningNoLine{\@classname}{No country present for an affiliation}%
    \fi
}
\pgfplotsset{compat=1.16} 
  \providecommand\BibTeX{{%
    \normalfont B\kern-0.5em{\scshape i\kern-0.25em b}\kern-0.8em\TeX}}}
\begin{document}

\title{Criteria Tell You More than Ratings: \\
Criteria Preference-Aware Light Graph Convolution \\for Effective Multi-Criteria Recommendation}


\author{Jin-Duk Park}
\affiliation{%
  \institution{Yonsei University}
}
\email{jindeok6@yonsei.ac.kr}

\author{Siqing Li}
\affiliation{%
  \institution{The University of New South Wales Australia}
}
\email{siqing.li@unsw.edu.au}

\author{Xin Cao}
\affiliation{%
  \institution{The University of New South Wales Australia}
}
\email{xin.cao@unsw.edu.au}

\author{Won-Yong Shin}
\authornote{Corresponding author}
\affiliation{%
  \institution{Yonsei University}
}
\email{wy.shin@yonsei.ac.kr}




\renewcommand{\shortauthors}{Jin-Duk Park et al.}

\begin{abstract}
The multi-criteria (MC) recommender system, which leverages MC rating information in a wide range of e-commerce areas, is ubiquitous nowadays. Surprisingly, although graph neural networks (GNNs) have been widely applied to develop various recommender systems due to GNN's high expressive capability in learning graph representations, it has been still unexplored how to design MC recommender systems with GNNs. In light of this, we make the first attempt towards designing a GNN-aided MC recommender system. Specifically, rather than straightforwardly adopting existing GNN-based recommendation methods, we devise a novel \textit{criteria preference-aware} light graph convolution (\textsf{CPA-LGC}) method, which is capable of precisely capturing the criteria preference of users as well as the collaborative signal in complex high-order connectivities. To this end, we first construct an \textit{MC expansion graph} that transforms user--item MC ratings into an expanded bipartite graph to potentially learn from the collaborative signal in MC ratings. Next, to strengthen the capability of criteria preference awareness, \textsf{CPA-LGC} incorporates newly characterized embeddings, including {\em user-specific criteria-preference embeddings} and {\em item-specific criterion embeddings}, into our graph convolution model. Through comprehensive evaluations using four real-world datasets, we demonstrate (a) the superiority over benchmark MC recommendation methods and benchmark recommendation methods using GNNs with tremendous gains, (b) the effectiveness of core components in \textsf{CPA-LGC}, and (c) the computational efficiency.
\end{abstract}



\ccsdesc[500]{Information systems~Recommender systems}

\keywords{Collaborative signal; criteria preference; graph neural network; light graph convolution; multi-criteria recommender system.}

\maketitle

\section{Introduction}
\label{section 1}

\begin{figure}[t]
        \centering
        \begin{subfigure}[c]{0.49\columnwidth}
                \includegraphics[width=0.99\columnwidth]{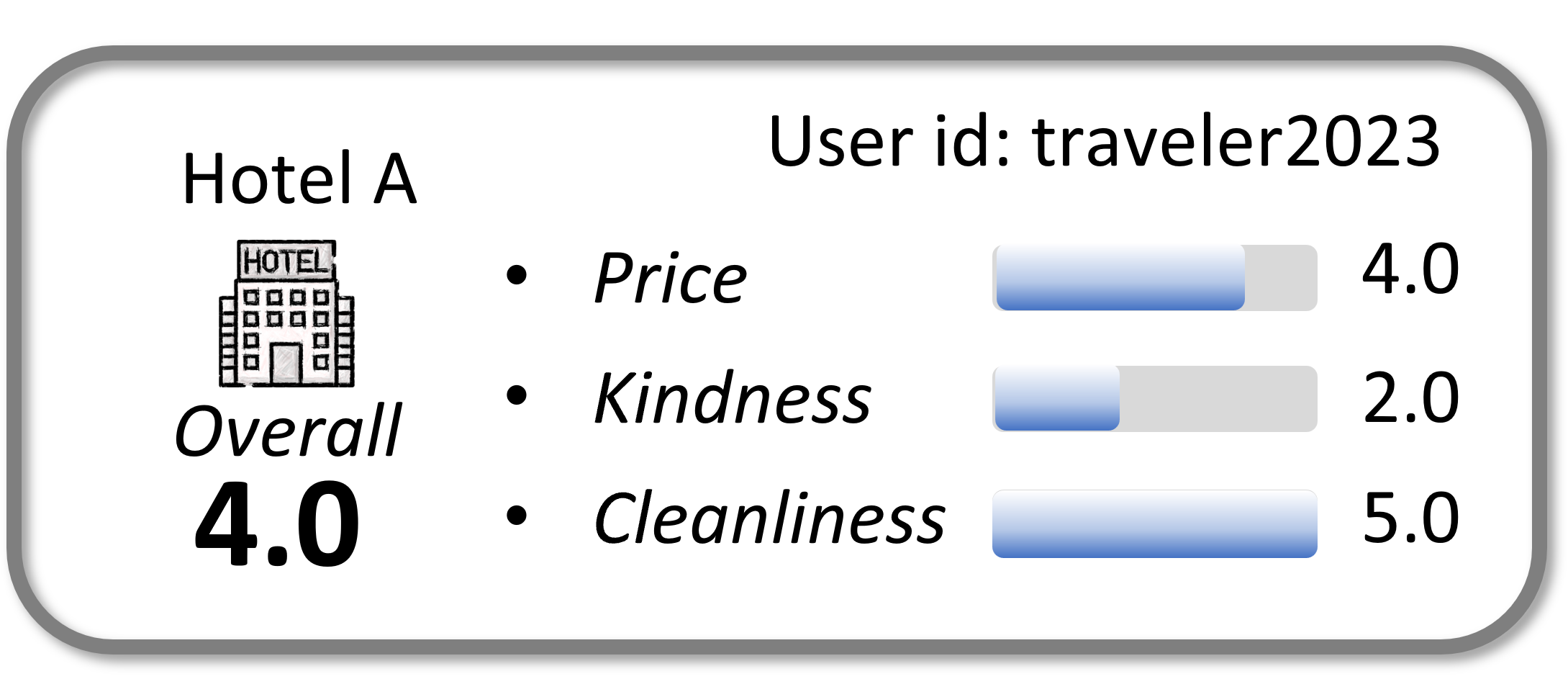}
                \caption{}
                \label{fig:intro_1a}
        \end{subfigure}        
        \begin{subfigure}[c]{0.49\columnwidth}
                \includegraphics[width=0.99\columnwidth]{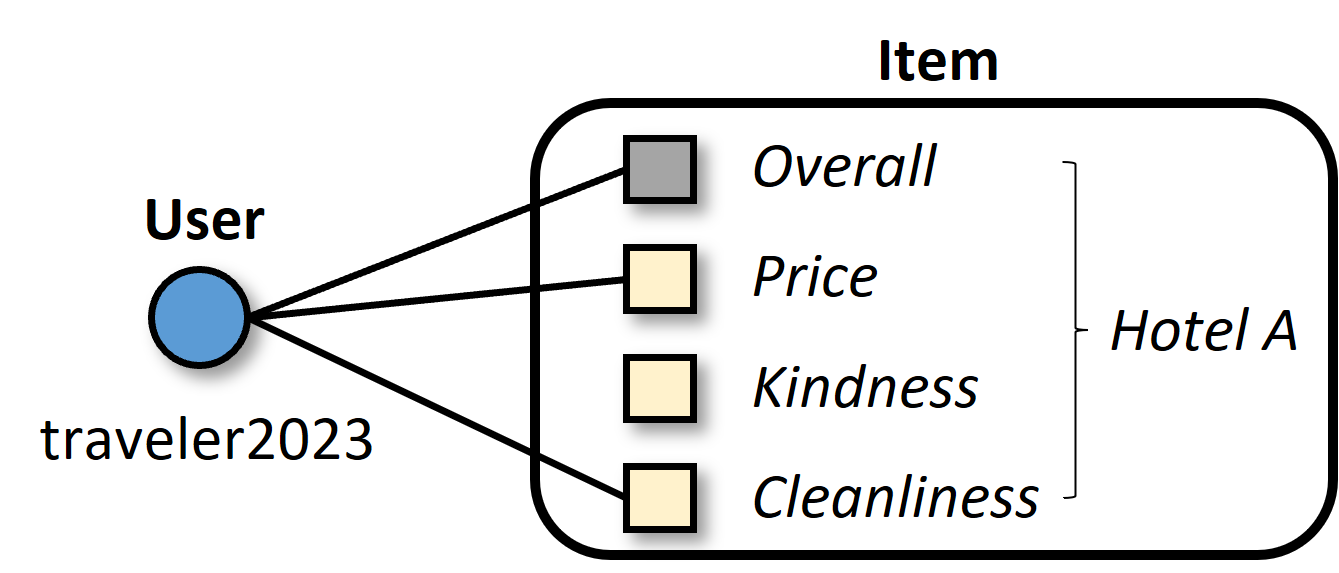}
                \caption{}
                \label{fig:intro_1b}
        \end{subfigure}

        \caption{ An illustration showing (a) four criteria ratings in a hotel domain, and (b) the corresponding MC expansion graph. }
        \label{dist_plot} 
\end{figure}

The so-called {\it multi-criteria (MC) recommender system} has been increasingly valuable for improving the recommendation accuracy based on enriched information in terms of {\it criteria} ratings of each item in a wide range of service areas such as restaurants, hotels, movies, music, {\it etc.} \cite{wang2011latent, shambour2021deep,jannach2014leveraging,nassar2020novel, li2019latent,mcauley2012learning}. MC recommender systems usually have a better ability in recommending relevant items to a user, compared to single rating systems, while better representing predilections of users \cite{fan2021predicting, nassar2020novel, shambour2021deep, zheng2019utility, li2019latent,jannach2014leveraging}. For instance, as illustrated in Figure \ref{fig:intro_1a}, a user can provide four criteria ratings (including {\it overall} ratings) in a hotel domain with the criterion being price, kindness, and cleanliness.

On one hand, collaborative filtering (CF), which exploits similar patterns learned from user--item historical interactions to recommend the most relevant items to a user, has emerged as one of the most common approaches to building recommender systems \cite{wang2019neural,he2020lightgcn,wang2020disentangled,mao2021ultragcn}. Due to its effectiveness and ease of implementation, most of existing MC recommender systems have been developed using various CF-based techniques such as matrix factorization \cite{fan2021predicting,chen2019deep,zheng2019utility} and deep neural networks \cite{tallapally2018user,nassar2020novel,shambour2021deep}. However, such prior studies are unable to explicitly capture the collaborative signal in complex contextual semantics across MC ratings.

On the other hand, graph neural networks (GNNs) have been applied to a wide range of recommendation tasks ({\it e.g.}, LightGCN \cite{he2020lightgcn} in single-rating recommendation, SURGE \cite{chang2021sequential} in sequential recommendation,  MBGC \cite{jin2020multi} in multi-behavior recommendation, and SR-GNN \cite{wu2019session} in session recommendation) due to GNN's high expressive capability in learning high-order proximity among users and items, resulting in accurate recommendation results. Surprisingly, albeit their state-of-the-art performance, there has been no prior attempt to design MC recommender systems using GNNs. In this context, even with a number of studies on CF-based MC recommendation \cite{nassar2020novel,tallapally2018user,shambour2021deep,fan2021predicting}, a natural question arising is: ``Is it beneficial to take advantage of GNNs for solving the MC recommendation problem in terms of both {\it effectiveness} and {\it efficiency}?"

To answer this question, we make the first attempt towards designing a {\it lightweight} GNN-aided MC recommender system. Rather than straightforwardly adopting existing GNN-based recommendation methods for MC recommendation, we devise {\it our own} methodology, built upon new design principles and comprehensive empirical findings. To this end, we outline two design challenges that must be addressed when building a new GNN-based MC recommendation method:
\begin{itemize}
\item {\bf Graph construction:} which graph type should be taken into account to explore the collaborative signal in MC ratings;
\item {\bf Criteria preference awareness:} how to maximally grasp the criteria preference of users through graph convolution.
\end{itemize}
\noindent\textbf{(\underline{Idea 1})} In designing recommender systems only using single ratings, it is natural to construct a bipartite graph by establishing edges based on user--item interactions as in \cite{berg2017graph,ying2018graph,seo2022siren,wang2019neural,he2020lightgcn}. On the other hand, in MC recommender systems, rather than using one bipartite graph, the graph construction step accompanies non-straightforward design choices along with MC ratings. To harness the expressive capability of GNNs in learning representations, GNNs should be able to leverage the complex behavioral similarity in the high-order connectivities {\it across} MC ratings. Modeling MC ratings as a multi-graph with multi-relations can be one possible option. However, GNNs designed for such heterogeneous graphs mostly require excessive computational costs or handcrafted meta-paths \cite{wang2019heterogeneous,guo2020survey,lv2021we}, which are undesirable as our design principle since we aim to build a lightweight model with few learnable parameters. As an alternative, we present an {\it MC expansion graph} that transforms MC ratings into an expanded bipartite graph to potentially learn from the collaborative signal in MC ratings. Concretely, in the constructed graph, each item is expanded to different {\it criterion-item} nodes. Figure \ref{fig:intro_1b} illustrates the MC expansion graph constructed by creating some edges, corresponding to high ratings, i.e., the rating scores of 3--5, between a user node and {\it criterion-item} nodes. Our graph construction enables multi-layer GNNs to effectively capture complex contextual semantics existing among MC ratings.

\noindent\textbf{(\underline{Idea 2})} We are interested in designing a GNN model that is capable of making full use of MC rating information based on the constructed MC expansion graph. Meanwhile, users tend to make decisions according to their preferences {\it w.r.t.} one or multiple aspects (criteria) of items \cite{nassar2020novel, shambour2021deep}. For example, in a hotel recommender system, some users may prefer a hotel based on its cleanliness while others may like the same hotel for its price, check-in service, or any other combinations of the distinct attributes of that hotel. In light of this, it is of paramount importance to be aware of the criteria preference of each user when we learn representations through graph convolution. As one of our main contributions, we propose a novel GNN architecture, {\it criteria preference-aware} light graph convolution (\textsf{CPA-LGC}), which is capable of precisely capturing the criteria preference of users as well as the collaborative signal in complex high-order connectivities on the MC expansion graph at a fine-grained level. To reinforce the capability of criteria preference awareness, we newly characterize two embeddings, including {\it user-specific criteria-preference (UCP) embeddings and {\it item-specific criteria (IC) embeddings}}, and incorporate them into the graph convolution model. Then, \textsf{CPA-LGC} predicts the user preference by discovering the final representations of user nodes and criterion-item nodes that accommodate the two newly characterized embeddings.

To validate the effectiveness of \textsf{CPA-LGC}, we comprehensively conduct empirical evaluations using large-scale benchmark datasets ({\it e.g.}, x23.0, x5.5, and x12.8 scale of the datasets used in \cite{tallapally2018user}, \cite{nassar2020novel}, and \cite{shambour2021deep} in terms of the number of overall ratings, respectively). Most importantly, experimental results demonstrate that our method significantly and consistently outperforms the best MC recommendation competitor and the best GNN-based recommendation competitor up to 141.20\% and 58.66\%, respectively, in terms of the precision. 

Our main contributions are summarized as follows:
\begin{itemize}
    \item \textbf{Novel methodology:} We propose an MC recommendation method using a novel GNN architecture, named as \textsf{CPA-LGC}, that deliberately captures 1) the collaborative signal in complex high-order connectivities from constructing our MC expansion graph and 2) the criteria preference of users from accommodating two new embeddings ({\it i.e.}, UCP embeddings and IC embeddings).
    
    \item {\bf Analysis and evaluation:} We validate the rationality and effectiveness of \textsf{CPA-LGC} through extensive experiments on four real-world datasets. We demonstrate (a) the superiority over eleven state-of-the-art recommendation methods by a significant margin, (b) the impact of key hyperparameters, (c) the influence of each component in \textsf{CPA-LGC}, (d) the degree of over-smoothing alleviation, and (e) the computational efficiency with linear complexity in the number of ratings.
\end{itemize}

\section{Problem Definition}
In this section, we formally define the top-$K$ MC recommendation, along with basic notations. Let $u \in \mathcal{U}$ and $i\in\mathcal{I}$ denote a user and an item, respectively, where $\mathcal{U}$ and $\mathcal{I}$ denote the sets of all users and all items, respectively. $\mathcal{N}_u \subset \mathcal{I}$ denotes a set of items interacted by user $u$. Then, the top-$K$ MC recommendation problem is defined as follows:

\noindent\textbf{Definition 1: (Top-$K$ MC recommendation)} Given $u \in \mathcal{U}$ and $i\in\mathcal{I}$, and $C+1$ user--item ratings $\mathcal{R}_0 \times \mathcal{R}_1 \times ... \times \mathcal{R}_C $ including an overall rating $\mathcal{R}_0$, the top-$K$ MC recommendation aims to recommend top-$K$ items that user $u\in\mathcal{U}$ is most likely to prefer among his/her non-interacted items in $\mathcal{I} \setminus \mathcal{N}_u$ {\it w.r.t.} the {\it overall} rating by using all $C+1$ user-item MC ratings.

\section{Proposed Method: CPA-LGC}
\label{section 3}
In this section, we describe our methodology, which includes how to construct an MC expansion graph and how to learn criteria preference awareness alongside our proposed \textsf{CPA-LGC} method. Then, we present the optimization in \textsf{CPA-LGC}. Moreover, we provide the model analysis including the computational complexity of \textsf{CPA-LGC} and the relationship with R-GCN.

\subsection{Graph Construction}
\label{section 3.1}

\begin{figure}[t]
        \centering
        \begin{subfigure}[c]{0.25\columnwidth}
                \includegraphics[width=0.90\columnwidth]{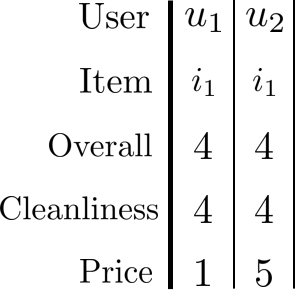}
                \caption{}
                \label{fig:motif_table}
        \end{subfigure}        
        \begin{subfigure}[c]{0.36\columnwidth}
                \includegraphics[width=0.9\columnwidth]{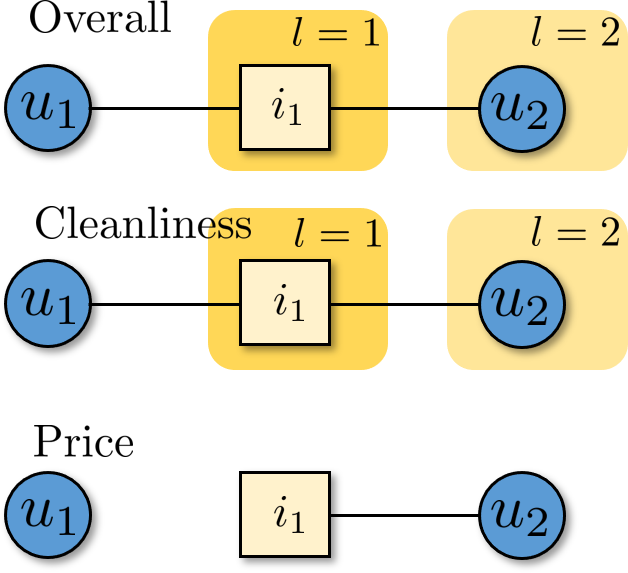}
                \caption{}
                \label{fig:motif_naive}
        \end{subfigure}
        \begin{subfigure}[c]{0.36\columnwidth}
                \includegraphics[width=0.95\columnwidth]{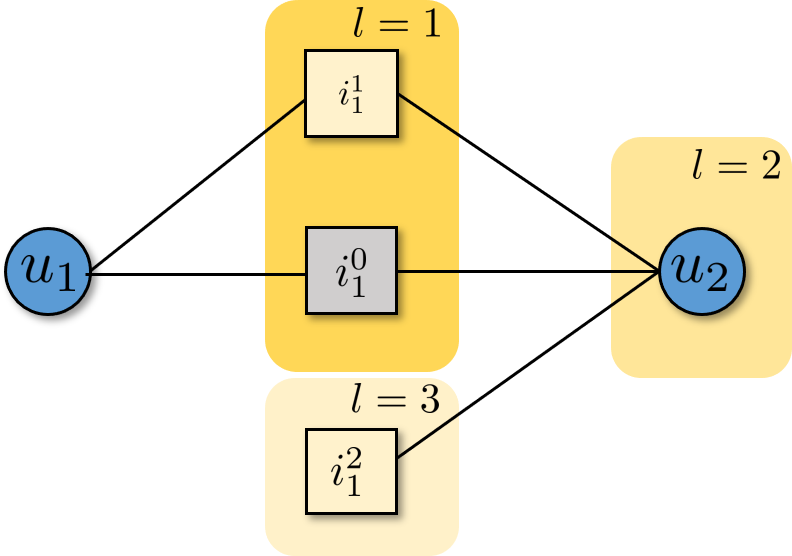}
                \caption{}
                \label{fig:motif_mceg}
        \end{subfigure}
        \caption{An example illustrating (a) a rating instance with three criteria ratings, (b) three graphs, each of which is constructed by ratings per criterion, and (c) our MC expansion graph. In (b) and (c), edges corresponding to the rating scores of 3--5 are created and newly-involved nodes in each GNN layer for target user $u_1$ are marked with different colors.}
        \label{fig1} 
\end{figure}


\noindent{\bf Construction of an MC expansion graph.} A na\"ive graph construction approach using MC rating information would be to construct $C+1$ separate bipartite graphs based on MC ratings including overall ratings. However, in this case, complex contextual semantics existing among multiple user--item interactions cannot be captured via multi-layer GNNs. Figure \ref{fig:motif_table} illustrates a rating instance of hotel $i_1$ with three criteria ratings in the 1--5 rating scale, where two users $u_1$ and $u_2$ reveal a rather \textit{complex} behavioral similarity in that they express the same opinions in terms of the cleanliness, but not in terms of the price. Figure \ref{fig:motif_naive} visualizes three separate bipartite graphs, each of which is constructed by creating some edges, corresponding to high ratings ({\it i.e.}, the rating scores of 3--5), between users and items for each criterion. However, this na\"ive graph construction fails to capture the complex behavioral similarity in the high-order connectivities \textit{across} MC ratings when multi-layer GNNs are employed independently on each graph. For example, the rating information of user $u_2$ {\it w.r.t.} the price cannot be propagated to other graphs where overall and cleanliness ratings are concerned, which limits the high expressive capability of GNNs for acquiring richer representations.

To overcome this inherent limitation, we design a new bipartite graph, namely an \textit{MC expansion graph}, in which each item is expanded to $C+1$ different \textit{criterion-item} nodes, as illustrated in Figure \ref{fig:motif_mceg}. If a user provided a high rating or had a positive interaction for a particular criterion, then an edge between the corresponding user and criterion-item nodes is created. Formally, given a set of criterion-item nodes {\it positively} rated by user $u$ {\it w.r.t.} criterion $c$, denoted as $\mathcal{N}^c_{u}$, the resulting MC expansion graph is denoted as $\mathcal{G} = (\mathcal{V},\mathcal{E})$, where $\mathcal{V}$ and $\mathcal{E}=\{(u,i^c)| u\in\mathcal{U}, i^c \in\mathcal{N}^c_{u},c=0,1,\cdots,C\}$ are the sets of nodes and edges in the graph, respectively. In our setting, the union of all criterion-item node sets positively rated by user $u$ is denoted as $\mathcal{N}_u = \bigcup_{c=0}^C \mathcal{N}_u^{c}$. Note that the graph $\mathcal{G}$ can be modeled as a {\it weighted} graph so that MC rating information is leveraged more precisely. Our MC expansion graph construction enables us to exploit complex {\it high-order proximity} among user nodes and criterion-item nodes with the aid of GNNs. In other words, by feeding the MC expansion graph into GNNs, it is possible to effectively capture the collaborative signal in complex high-order connectivities ({\it i.e.}, complex contextual semantics existing among multiple user--item interactions). Figure \ref{fig:motif_mceg} visualizes our MC expansion graph in which edges corresponding to the rating scores of 3--5 are created. If a 3-layer GNN is applied to the graph, then we are capable of generating user/criterion-item representations that reflect high-order connectivity information. As an example, information of a complex behavioral similarity, such as ``for hotel $i_1$, two users $u_1$ and $u_2$ have the same preference for cleanliness, but different preferences for price", can be incorporated into a vector representation of target user $u_1$ through graph convolution.

\noindent{\bf Over-smoothing effect in the MC expansion graph}. While stacking multiple layers in GNNs is beneficial in capturing the high-order structural information, it may lead to the problem of over-smoothing where node representations converge to a certain value and thus become less distinguishable \cite{li2018deeper, chen2020simple}. This holds particularly strong validity for nodes of a higher degree \cite{chen2020simple,lu2021skipnode}, which exhibit a higher convergence rate {\it w.r.t.} the number of layers in GNNs \cite{chen2020simple}. In the MC expansion graph, the over-smoothing effect may be intensified due to an increased number of item neighbors for each user. To alleviate this problem, we design an additional module to be added to each GNN layer, which will be specified in the following subsection. 

\subsection{Criteria Preference-Aware Architecture}
\begin{figure}[t]
    \centering
    \includegraphics[width=0.49\textwidth]{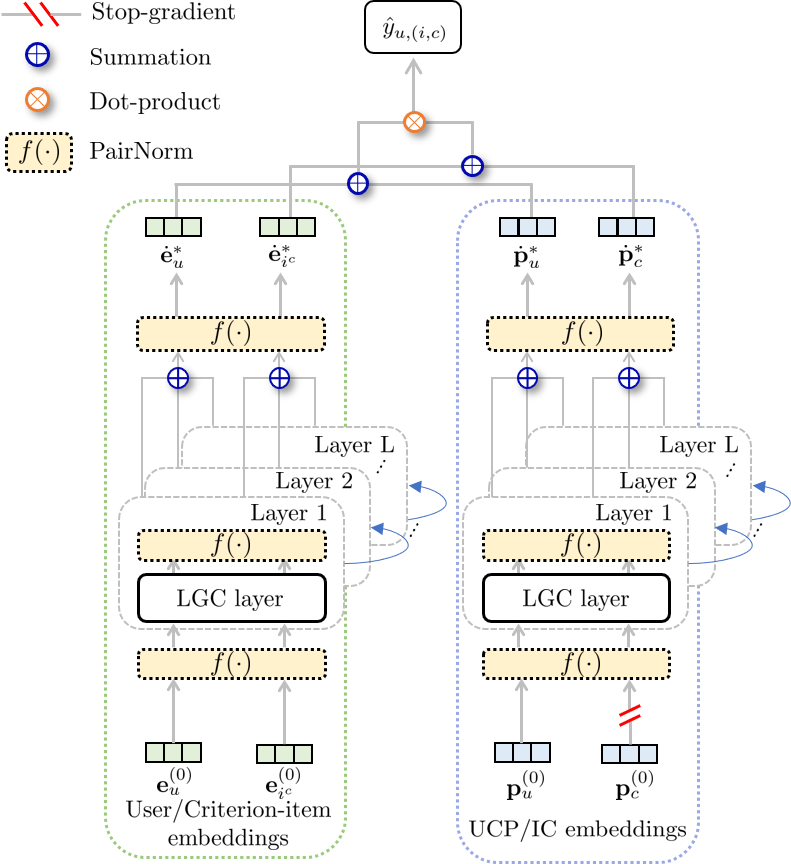}
    \caption{The schematic overview of \textsf{CPA-LGC}.}
    \label{fig:CPA-LGC}
\end{figure}
In this subsection, we elaborate on the four key components of \textsf{CPA-LGC}. The
schematic overview of \textsf{CPA-LGC} is illustrated in Figure \ref{fig:CPA-LGC}.
\subsubsection{Layer-wise Over-smoothing Alleviation.}
As stated in Section \ref{section 3.1}, our MC expansion graph construction may potentially intensify over-smoothing in GNNs. To solve this issue, we employ a {\it layer-wise over-smoothing alleviation} strategy. In our study, we adopt PairNorm \cite{zhao2019pairnorm} as a simple normalization technique such that all pairwise distances between node representations remain unchanged across layers. PairNorm is composed of centering and rescaling steps for each node $v\in\mathcal{V}$ in $\mathcal{G}$ and is expressed as follows:
\begin{equation}
\begin{aligned}
\label{pairnorm}
\mathbf{m}^{(l)}_v = \mathbf{e}^{(l)}_v - \frac{1}{|\mathcal{V}|}\sum_{i=1}^{|\mathcal{V}|}\mathbf{e}^{(l)}_i\\
\dot{\mathbf{e}}^{(l)}_v = s\sqrt{|\mathcal{V}|}\frac{\mathbf{m}^{(l)}_v}{\sqrt{\|\mathbf{E}^{(l)}\|_F^2}}, 
\end{aligned}
\end{equation}
where $\mathbf{e}^{(l)}_v$ is the representation of node $v$ after the $l$-th layer propagation, which will be specified in later; ${\bf m}_v^{(l)}$ is the centered representation of node $v$ after the $l$-th layer propagation; $\dot{\mathbf{e}}^{(l)}_v$ is the output of the PairNorm operation $f(\cdot)$, that is, $\dot{\bf e}_v^{(l)}=f({\bf e}_v^{(l)})$; $\mathbf{E}^{(l)} \in \mathbb{R}^{|\mathcal{V}| \times d}$ is the node representation matrix after the $l$-th layer propagation given the $d$-dimensional latent representation vector of each node; $\|\cdot\|_F$ is the Frobenius norm of a matrix; and $s$ is the scaling hyperparameter that controls the total pairwise squared distance between node representations. In \textsf{CPA-LGC}, we use $f(\cdot)$ after each GNN layer (see Figure \ref{fig:CPA-LGC}), so as to prevent the over-smoothing that may be potentially intensified by the increased degree of each node in the MC expansion graph.
\subsubsection{LGC for User/Criterion-Item Embeddings}

It is known that {\it lightweight} GCN-based models (see \cite{he2020lightgcn, chen2020revisiting, mao2021ultragcn} and references therein), which simplify GCNs by removing feature transformations and/or nonlinear activations, are quite effective in achieving state-of-the-art performance for single rating recommender system. Since utilizing MC ratings inherently accompanies
expensive computational overheads compared to the case of single ratings, it is also vital to adopt a lightweight model for designing MC recommender systems with GNNs while guaranteeing satisfactory performance. In light of this, we build a simple yet effective layer-wise LGC operation in the MC expansion graph, which is formulated as:
\begin{equation}
\begin{aligned}
\label{lgc eq}
\mathbf{e}^{(l)}_u = \sum_{i^c \in \mathcal{N}_u}\frac{w_{u,i^c}}{\sqrt{\sum_{i^c \in \mathcal{N}_u}{w_{u,i^c}}}\sqrt{\sum_{v \in \mathcal{N}_{i^c}}w_{v,i^c}}} \dot{\mathbf{e}}^{(l-1)}_{i^c} \\
\mathbf{e}^{(l)}_{i^c} = \sum_{u \in \mathcal{N}_{i^c}}\frac{w_{u,i^c}}{\sqrt{\sum_{u \in \mathcal{N}_{i^c}}{w_{u,i^c}}}\sqrt{\sum_{j^r \in \mathcal{N}_u}w_{u,j^r}}} \dot{\mathbf{e}}^{(l-1)}_{u}, 
\end{aligned}
\end{equation}
where $\mathbf{e}^{(l)}_u$ and $\mathbf{e}^{(l)}_{i^c}$ indicate the representations of user node $u$ and criterion-item node $i^c$, respectively, after the $l$-th layer propagation (see the left part of Figure \ref{fig:CPA-LGC});\footnote{$\mathbf{e}^{(0)}_{u}$ and $\mathbf{e}^{(0)}_{i^c}$ are the ID embeddings of user node $u$ and criterion-item node $i^c$, respectively.} $w_{u,i^c}$ is the edge weight between node pair $(u,i^c)$; and the denominator in Eq. \eqref{lgc eq} is the symmetric normalization term, which basically follows the design of standard GCN \cite{DBLP:conf/iclr/KipfW17}, to prevent the scale of embeddings from increasing over layers in the weighted graph. Note that $\dot{\mathbf{e}}^{(l)}_u = f(\mathbf{e}^{(l)}_u)$ and $\dot{\mathbf{e}}^{(l)}_{i^c} =f(\mathbf{e}^{(l)}_{i^c})$.


\subsubsection{LGC for UCP/IC Embeddings}
\begin{figure}[t]
    \centering
    \includegraphics[width=0.47\textwidth]{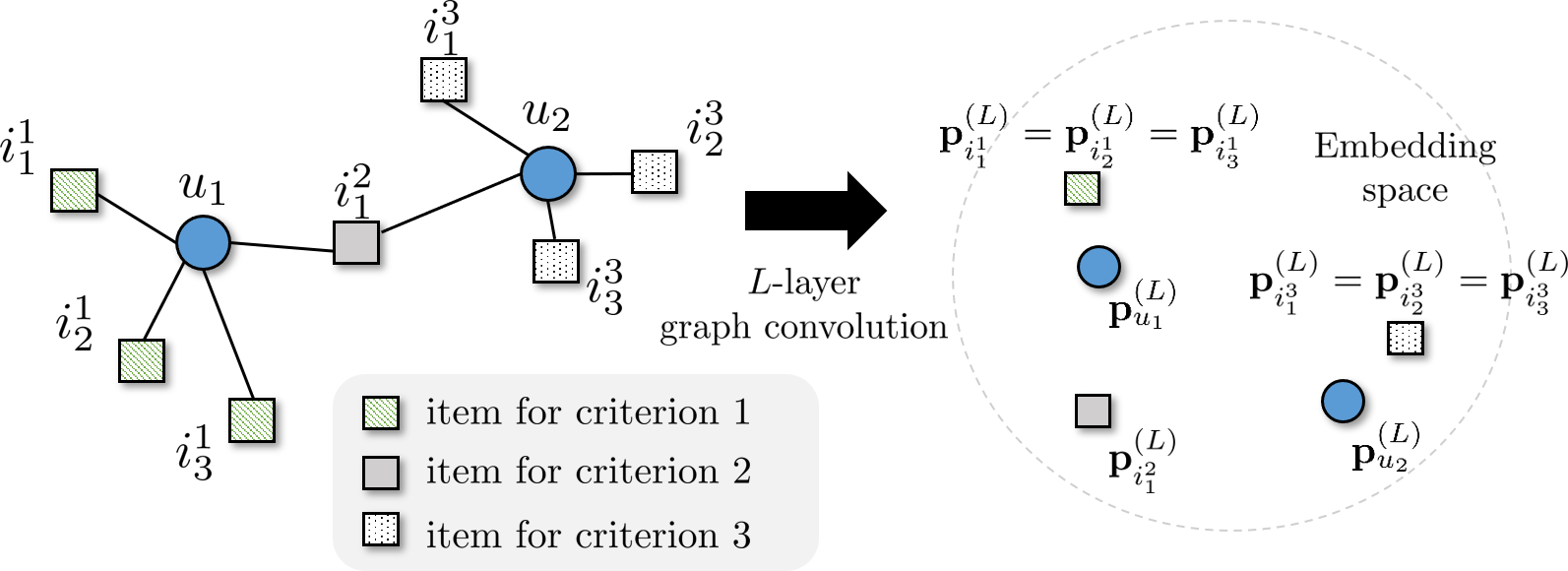}
    \caption{A motivating example describing user criteria preference can be captured via graph convolution. Here, the different IC embeddings are described with different colors and patterns.}
    \label{fig:toy_example}
    \setlength{\textfloatsep}{1\baselineskip plus 0.2\baselineskip minus 0.5\baselineskip}
\end{figure}
As a core component of \textsf{CPA-LGC}, to precisely capture the criteria preference of users, we newly characterize two types of embeddings, including \textit{UCP embeddings} and \textit{IC embeddings}, into our graph convolution model. To generate these newly characterized representations, we formulate the layer-wise LGC operation in the MC expansion graph as follows:
\begin{equation}
\begin{aligned}
\label{criteria lgc eq}
\mathbf{p}^{(l)}_u = \sum_{i^c \in \mathcal{N}_u}\frac{w_{u,i^c}}{\sqrt{\sum_{i^c \in \mathcal{N}_u}{w_{u,i^c}}}\sqrt{\sum_{v \in \mathcal{N}_{i^c}}w_{v,i^c}}} \dot{\mathbf{p}}^{(l-1)}_{i^c} \\
\mathbf{p}^{(l)}_{i^c} = \sum_{u \in \mathcal{N}_{i^c}}\frac{w_{u,i^c}}{\sqrt{\sum_{u \in \mathcal{N}_{i^c}}{w_{u,i^c}}}\sqrt{\sum_{j^r \in \mathcal{N}_u}w_{u,j^r}}} \dot{\mathbf{p}}^{(l-1)}_{u}, 
\end{aligned}
\end{equation}
where $\mathbf{p}^{(l)}_u$ and $\mathbf{p}^{(l)}_{i^c}$ are the UCP embedding of user node $u$ and the IC embedding of criterion-item node $i^c$, respectively, after the $l$-th layer propagation (see the right part of Figure \ref{fig:CPA-LGC}); the denominator in Eq. \eqref{criteria lgc eq} is the symmetric normalization term; and $\dot{\mathbf{p}}^{(l)}_u = f(\mathbf{p}^{(l)}_u)$ and $\dot{\mathbf{p}}^{(l)}_{i^c} =f(\mathbf{p}^{(l)}_{i^c})$. For efficient memory management, we set the initial IC embeddings $\mathbf{p}^{(0)}_{i^c}$ belonging to the same criterion $c$ to be the same, generating $C+1$ different initial embeddings that act as distinct labels without being clustered with each other. As depicted in Figure \ref{fig:CPA-LGC}, we utilize the \textit{stop-gradient} operation in the feed-forward process of the IC embeddings ${\bf p}_{i^c}^{(l)}$ to prevent back-propagation of gradients and unnecessarily excessive  computation. 

Now, let us explain the interplay between two embeddings ${\bf p}_u^{(l)}$ and ${\bf p}_{i^c}^{(l)}$ via graph convolution. Due to the fact that stacking multiple layers in GNNs results in an increased similarity of representations among connected nodes \cite{zhao2019pairnorm}, a user node $u$ connected to a number of different criterion-item nodes belonging to the {\it same} criterion $c$ will have its UCP embedding ${\bf p}_u^{(l)}$ that is co-located to the corresponding IC embeddings in the embedding space. Figure \ref{fig:toy_example} illustrates a motivating example where two users $u_1$ and $u_2$ in the given graph are connected to several criterion-item nodes for criterion 1 and criterion 3, respectively; through $L$-layer graph convolution, the UCP embeddings ${\bf p}_{u_1}^{(L)}$ and ${\bf p}_{u_2}^{(L)}$ are more closely located to the IC embeddings whose related criterion is 1 and 3, respectively.\footnote{In general, IC embeddings ${\bf p}_{i^c}^{(l)}$ are not necessarily the same for different $i^c$'s belonging to the same criterion $c$.} By harnessing the UCP embeddings and IC embeddings as well as user/criterion-item embeddings in the prediction stage, we are capable of achieving higher accuracy of MC recommendation, which will be verified in Section \ref{section 4.2}.

\subsubsection{Layer Combination and Prediction}
\label{section 3.2.3}
The layer combination operation is known to be effective in the sense of capturing different semantics for each layer and alleviating the potential over-smoothing problem \cite{he2020lightgcn, wang2019neural, xu2018representation}. Thus, \textsf{CPA-LGC} leverages the layer combination ({\it i.e.}, layer aggregation) to obtain the combined embeddings, while setting the importance of each layer-wise representation uniformly since such a setting leads to good performance in general \cite{he2020lightgcn}. The combined representations after $L$-layer propagation are expressed as
\begin{equation}
\label{layer comb}
\begin{aligned}
    \mathbf{e}^*_u\! =\! \frac{1}{L}\sum_{l=0}^L{\dot{\mathbf{e}}_u^{(l)}}; \mathbf{e}^*_{i^c} \!=\! \frac{1}{L}\sum_{l=0}^L{\dot{\mathbf{e}}_{i^c}^{(l)}}; \mathbf{p}^*_u \!=\! \frac{1}{L}\sum_{l=0}^L{\dot{\mathbf{p}}_u^{(l)}}; \mathbf{p}^*_{i^c}\! =\! \frac{1}{L}\sum_{l=0}^L{\dot{\mathbf{p}}_{i^c}^{(l)}},
\end{aligned} 
\end{equation}
where $\mathbf{e}^*_u$ and $\mathbf{e}^*_{i^c}$ are the combined embeddings of user $u$ and criterion-item node $i^c$, respectively; $\mathbf{p}^*_u$ and $\mathbf{p}^*_{i^c}$ are the combined UCP embedding of user node $u$ and the combined IC embedding of criterion-item node $i^c$, respectively.

Next, \textsf{CPA-LGC} predicts user $u$'s preference for target criterion-item node $i^c$. To this end, we first form the final representations of a user and a criterion-item node as $\dot{\bf e}_u^*+\dot{\bf p}_u^*$ and $\dot{\bf e}_{i^c}^*+\dot{\bf p}_{i^c}^*$, respectively, where $\dot{\bf e}_u^*=f({\bf e}_u^*)$, $\dot{\bf p}_u^*=f({\bf p}_u^*)$, $\dot{\bf e}_{i^c}^*=f({\bf e}_{i^c}^*)$, and $\dot{\bf p}_{i^c}^*=f({\bf p}_{i^c}^*)$. Then, we compute the matching score $\hat{y}_{u,i^c}$ between the final embedding of user $u$ and the final embedding of criterion-item node $i^c$ via the dot-product as follows:
\begin{equation}
\label{prediction}
    \begin{aligned}
        \hat{y}_{u,i^c} = (\dot{\mathbf{e}}^*_u + \dot{\mathbf{p}}^*_u) \cdot (\dot{\mathbf{e}}^*_{i^c} + \dot{\mathbf{p}}^*_{i^c})^\top.
    \end{aligned}
\end{equation}
It is worth noting that we only take into account $\hat{y}_{u,i^0}$ for prediction since we are interested in predicting the preference of user $u$ for target item $i$ {\it w.r.t.} the {\it overall} rating.

\subsection{Optimization}
In \textsf{CPA-LGC}, we adopt the Bayesian personalized ranking (BPR) loss \cite{rendle2012bpr}, which is widely used for optimizing general recommender systems \cite{wang2019neural, he2018adversarial, he2020lightgcn,he2016vbpr,wang2019kgat}, to learn the trainable parameters of \textsf{CPA-LGC}. The BPR loss is built upon the assumption that the preference for the interacted item of a user is likely to be higher than that for the non-interacted item of a user. Then, our loss function is formulated as follows:
\begin{equation}
\begin{aligned}
\label{bpr_update}
-\sum_{u=1}^{|\mathcal{U}|} \sum_{i^c \in \mathcal{N}_u} \sum_{j^r \notin \mathcal{N}_u} \ln{\sigma\left(\hat{y}_{u,i^c} - \hat{y}_{u,j^r}\right)} + \lambda \|\Theta\|_2^2,
\end{aligned}
\end{equation}
where $\Theta$ is all trainable parameters in the model, and
$\lambda$ is a hyperparameter that controls the $L_2$ regularization strength.

\subsection{Model Analysis}
\subsubsection{Complexity Analysis}
We theoretically analyze the computational complexity
of the \textsf{CPA-LGC} method with given graph $\mathcal{G} = (\mathcal{V},\mathcal{E})$ and $d$-dimensional embeddings.
\begin{theorem}
\label{complexity_thm}
The computational complexity of \textsf{CPA-LGC} is at most linear in $|\mathcal{E}|$.
\end{theorem}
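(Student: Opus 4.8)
The plan is to decompose the end-to-end forward pass, together with the per-iteration loss evaluation, of \textsf{CPA-LGC} into its constituent operations—PairNorm (Eq. \eqref{pairnorm}), the two LGC propagations (Eqs. \eqref{lgc eq} and \eqref{criteria lgc eq}), layer combination (Eq. \eqref{layer comb}), and the BPR objective (Eq. \eqref{bpr_update})—to bound the cost of each, and then to show that the dominant term is linear in $|\mathcal{E}|$ once the embedding dimension $d$ and the number of layers $L$ are treated as constants.

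First I would handle the graph-convolution steps, which I expect to dominate. The key observation is that each layer of Eq. \eqref{lgc eq}, and likewise each layer of Eq. \eqref{criteria lgc eq}, is a \emph{sparse} propagation: computing every $\mathbf{e}_u^{(l)}$ and $\mathbf{e}_{i^c}^{(l)}$ amounts to one weighted aggregation along each incident edge, so the total number of neighbor accumulations equals $\sum_{v\in\mathcal{V}}|\mathcal{N}_v| = 2|\mathcal{E}|$, each costing $O(d)$. Hence one layer costs $O(|\mathcal{E}|d)$, and both propagations over $L$ layers cost $O(L|\mathcal{E}|d)$. I would note that the symmetric-normalization denominators are fixed functions of the (weighted) degrees and can be precomputed once in $O(|\mathcal{E}|)$ time, so they do not inflate the per-iteration cost, and the stop-gradient on the IC embeddings only removes work.

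Next I would bound the auxiliary operations and tie the vertex count back to the edge count. PairNorm in Eq. \eqref{pairnorm} requires a global mean and a Frobenius norm over $\mathbf{E}^{(l)}\in\mathbb{R}^{|\mathcal{V}|\times d}$, costing $O(|\mathcal{V}|d)$ per layer, while layer combination in Eq. \eqref{layer comb} sums $L+1$ representations per node, costing $O(L|\mathcal{V}|d)$. To fold these vertex-indexed terms into the edge-indexed terms, I would invoke the standard assumption that the graph has no isolated nodes, so every node has degree at least one and $|\mathcal{V}|\le\sum_{v\in\mathcal{V}}|\mathcal{N}_v| = 2|\mathcal{E}|$, giving $|\mathcal{V}| = O(|\mathcal{E}|)$. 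For prediction and loss, the subtle point is that the triple sum in Eq. \eqref{bpr_update} is realized through negative sampling—a constant number of sampled $j^r$ per observed edge—so only $O(|\mathcal{E}|)$ triplets are scored, each via an $O(d)$ dot product as in Eq. \eqref{prediction}, for $O(|\mathcal{E}|d)$ in total.

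The main obstacle, and really the crux of the argument, is the sparse-propagation bound for the LGC steps: one must argue that the aggregation is carried out edge-wise (i.e., as a sparse matrix--matrix product) rather than densely, so that its cost scales with $|\mathcal{E}|$ rather than with $|\mathcal{U}|\cdot|\mathcal{I}|$, together with the negative-sampling reading of the BPR sum that prevents the loss from becoming quadratic in the number of items. Once these two points are established, summing all contributions yields a total cost of $O(L|\mathcal{E}|d)$; treating the hyperparameters $L$ and $d$ as constants gives the claimed $O(|\mathcal{E}|)$ bound, i.e., at most linear in $|\mathcal{E}|$.
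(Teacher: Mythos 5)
Your proposal is correct and follows essentially the same route as the paper's proof: bound each $L$-layer propagation at $\mathcal{O}(L|\mathcal{E}|d)$, the PairNorm steps at $\mathcal{O}(L|\mathcal{V}|d)$, and conclude $\mathcal{O}(Ld(|\mathcal{V}|+|\mathcal{E}|))=\mathcal{O}(Ld|\mathcal{E}|)$. The only differences are that you derive the sparse-propagation bound and the $|\mathcal{V}|=\mathcal{O}(|\mathcal{E}|)$ step explicitly (the paper cites references and leaves the latter implicit) and additionally account for the layer combination and the negative-sampled BPR loss, which the paper's proof omits but which do not change the conclusion.
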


 Note that the complexity of \textsf{CPA-LGC} scales {\it linearly} with the number of user--item interactions ({\it i.e.}, overall ratings) due to the fact that $|\mathcal{E}|$, corresponding to the number of MC ratings, is at most $C+1$ times the number of ratings. We empirically validate the average runtime complexity of \textsf{CPA-LGC} in Appendix \ref{app:scalability}.

\subsubsection{Relationship with R-GCN}
\label{setion 3.3}
We bridge LGC on the MC expansion graph and R-GCN \cite{schlichtkrull2018modeling}, a GNN architecture that was designed for handling multiple relations in {\it multi-graphs}. A multi-graph is denoted as $\tilde{G} = (\tilde{\mathcal{V}}, \tilde{\mathcal{E}}, \tilde{\mathcal{C}})$ with nodes $v, j \in \tilde{\mathcal{V}}$ and relations $(v, c, j) \in \tilde{\mathcal{E}}$, where $c \in \tilde{\mathcal{C}}$ is a relation type. By regarding the MC in ratings as multi-relations, the layer-wise message passing operation in R-GCN from user $u$'s perspective is expressed as
\begin{equation}
\begin{aligned}
\label{R-GCN}
\mathbf{e}_u^{(l)}=\sigma\left(\sum_{c \in \mathcal{C}} \sum_{i \in \mathcal{N}_u^c} \frac{1}{|\mathcal{N}_u^c|} \textbf{W}_c^{(l-1)} \mathbf{e}_i^{(l-1)}+{\bf W}_0^{(l-1)} \mathbf{e}_u^{(l-1)}\right), 
\end{aligned}
\end{equation}
where $\mathbf{e}_u^{(l)}$ and $\mathbf{e}_i^{(l)}$ are the representations of user $u$ and item $i$ after the $l$-th layer propagation; $\sigma$ is the non-linear activation function; $\mathcal{N}_u^c$ denotes the set of neighbors of node $u$ under relation $c \in \tilde{\mathcal{C}}$; and ${\bf W}_0^{(l)}$ and ${\bf W}_c^{(l)}$ are transformation matrices of self-connection and relation $c$ at the $l$-th layer, respectively. Now, we establish a relationship between \textsf{CPA-LGC} and R-GCN by passing through several processes. First, we eliminate the non-linear activation, self-connection, and transformation matrix of relation $c$. Second, we concretize relation $c$ via item embeddings $\mathbf{e}_i^{(l)}$ (\textit{i.e.}, $\mathbf{e}_{i^c}^{(l)}$). Then, Eq. \eqref{R-GCN} can be converted to
\begin{equation}
\label{R-GCN_simple}
\mathbf{e}_u^{(l)}= \sum_{c \in \mathcal{C}} \sum_{i \in \mathcal{N}_u^c} \frac{1}{|\mathcal{N}_u^c|} \mathbf{e}_{i^c}^{(l-1)}, 
\end{equation}
which indicates that the user embedding at the $l$-th layer can be calculated by aggregating information as the weighted sum of all the item embeddings related to a certain criterion at the previous layer. Therefore, LGC on the MC expansion graph in \textsf{CPA-LGC}, can also be viewed as a \textit{simplified} R-GCN with the relation transfer, allowing us to capture the collaborative signal in MC ratings along with far fewer parameters.

\section{Experimental Evaluation}
\label{section 4}
In this section, we systematically conduct extensive experiments to address the key research questions (RQs) outlined below:
\begin{itemize}
    \item \textbf{RQ1:} How much does \textsf{CPA-LGC} improve the top-$K$ recommendation over benchmark MC recommendation methods?
    \item \textbf{RQ2:} How much does \textsf{CPA-LGC} improve the top-$K$ recommendation over benchmark GNN-based recommendation methods?
    \item \textbf{RQ3:} How do key parameters affect the performance of \textsf{CPA-LGC}?
    \item \textbf{RQ4:} How does each component in \textsf{CPA-LGC} contribute to the recommendation accuracy?
    \item \textbf{RQ5:} How does the MC expansion graph alleviate the over-smoothing effect alongside \textsf{CPA-LGC}?
\end{itemize}
\subsection{Experimental Settings}
\label{section 4.1}
\begin{table}[t!]
\small
\caption{Statistics of the four datasets used in our experiments. Here, $\gamma$ denotes the ratio of the number of MC ratings to the number of overall ratings.}
\begin{tabular}{ccccccc}
\hline
\multicolumn{1}{c}{\textbf{Dataset}} & {\begin{tabular}[c]{@{}c@{}} \textbf{\# of} \\ \textbf{users}\end{tabular}} & {\begin{tabular}[c]{@{}c@{}} \textbf{\# of} \\ \textbf{items}\end{tabular}} & {\begin{tabular}[c]{@{}c@{}} \textbf{\# of} \\ \textbf{overall ratings}\end{tabular}} & {\begin{tabular}[c]{@{}c@{}} \textbf{\# of} \\ \textbf{MC ratings}\end{tabular}} & {\begin{tabular}[c]{@{}c@{}}$C$  \end{tabular}}  & {\begin{tabular}[c]{@{}c@{}} \textbf{$\gamma$} \end{tabular}} \\ \hline
\multirow{1}{*}{\begin{tabular}[c]{@{}c@{}} \textbf{TA}\end{tabular}}   & 4,265 & 6,275 &  34,383 &  202,859 & 7 & 5.9 \\\hline
\multirow{1}{*}{\begin{tabular}[c]{@{}c@{}} \textbf{YM}\end{tabular}}   & 1,821 & 1,472 & 46,176 & 175,468 & 4 & 3.8  \\\hline
\multirow{1}{*}{\begin{tabular}[c]{@{}c@{}} \textbf{RB}\end{tabular}} & 4,017 & 3,422 & 159,755 & 607,067 & 4  & 3.8  \\\hline
\multirow{1}{*}{\begin{tabular}[c]{@{}c@{}} \textbf{YP}\end{tabular}}   & 58,971 & 19,820 & 445,724 & 1,408,487 & 3 & 3.1 \\\hline
\end{tabular}

\label{datasettable}

\end{table}

\noindent\textbf{Datasets.} We conduct experiments on four real-world datasets, which are widely used in studies on MC recommendation \cite{zheng2019utility, fan2021predicting, tallapally2018user,nassar2020novel, shambour2021deep, li2020learning}: TripAdvisor (TA), Yahoo!Movie (YM), RateBeer (RB), and Yelp-2022 (YP). Here, Yahoo!Movie was collected by requesting the authors of \cite{jannach2014leveraging}, and the other three datasets are publicly available. It is noteworthy that we use relatively large-scale datasets in comparison to prior studies \cite{zheng2019utility, fan2021predicting, tallapally2018user, shambour2021deep, li2020learning}. To ensure data quality, we use each user/item having at least five interactions. Table \ref{datasettable} summarizes some statistics of the four datasets. We provide further details of the datasets in Appendix \ref{app_dataset desc}.

\noindent\textbf{Competitors.} To comprehensively demonstrate the superiority of \textsf{CPA-LGC}, we present five MC recommendation methods (ExtandedSAE \cite{tallapally2018user}, UBM \cite{zheng2019utility}, DMCF \cite{nassar2020novel}, AEMC \cite{shambour2021deep}, and CFM \cite{fan2021predicting}) and six GNN-based recommendation methods (GC-MC \cite{berg2017graph}, SpectralCF \cite{zheng2018spectral}, NGCF \cite{wang2019neural}, DGCF \cite{wang2020disentangled}, LightGCN \cite{he2020lightgcn}, and $\text{LightGCN}_\text{MC}$).
\begin{table*}
\footnotesize
\caption{Performance comparison among \textsf{CPA-LGC} and benchmark MC recommendation methods for the four benchmark datasets. Here, the best ($X$) and second-best performers ($Y$) are highlighted in bold and underline, respectively. The gain against the second performer is calculated by $\frac{X-Y}{Y}\times 100$ (\%).}
  \begin{tabular}{c l c c c c c c c c}
    \Xhline{1.2pt}
    \multirow{2}{*}{Method} &
    \multirow{2}{*}{Metric} &
      \multicolumn{2}{c}{TA} & 
      \multicolumn{2}{c}{YM} &
      \multicolumn{2}{c}{RB} &
      \multicolumn{2}{c}{YP}\\
      
    & & $K=5$ & $K=10$ & $K=5$ & $K=10$ & $K=5$ & $K=10$  & $K=5$ & $K=10$ \\
    \hline
    \multirow{3}{*}{\rotatebox{0}{ExtendedSAE}} 
      & $Precision@K $ & 0.0012 & 0.0011 & \underline{0.0675} & \underline{0.0480} & 0.0210 & 0.0273 & OOM & OOM \\
      & $Recall@K$ & 0.0031 & 0.0092 & \underline{0.0694} & \underline{0.1000} & 0.0144 & 0.0374 & OOM & OOM \\
      & $NDCG@K$ & 0.0012 & 0.0043 & \underline{0.1072} & \underline{0.1154} & 0.0285 & 0.0435 & OOM & OOM \\
      \hline
    \multirow{3}{*}{\rotatebox{0}{UBM}} 
      & $Precision@K $ & 0.0158 & 0.0122 & 0.0160 & 0.0223 & 0.0250 & 0.0288 & 0.0137 & 0.0125 \\
      & $Recall@K$ & 0.0443 & 0.0533 & 0.0264 & 0.0294 & 0.0174 & 0.0355 & 0.0386 & 0.0713 \\
      & $NDCG@K$ & 0.0351 & 0.0346 & 0.0202 & 0.0245 & 0.0301 & 0.0440 & 0.0248 & 0.0341 \\
      \hline
    \multirow{3}{*}{\rotatebox{0}{DMCF}} 
      & $Precision@K$ & 0.0167 & 0.0137 & 0.0334 & 0.0242 & 0.0816 & \underline{0.0721} & 0.0090 & 0.0075\\
      & $Recall@K$ & 0.0174 & 0.0232 & 0.0333 & 0.0470 & 0.0493 & 0.0887 & 0.0102 & 0.0248 \\
      & $NDCG@K$ & 0.0115 & 0.0243 & 0.0541 & 0.0614 & 0.1104 & 0.1317 & 0.0304 & 0.0408 \\
      \hline
    \multirow{3}{*}{\rotatebox{0}{AEMC}} 
      & $Precision@K $ & 0.0156 & 0.0154 & 0.0358 & 0.0257 & \underline{0.0997} & 0.0807 & 0.0093 & 0.0064 \\
      & $Recall@K$ & 0.0172 & 0.0251 & 0.0398 & 0.0540 & 0.0671 & 0.1090 & 0.0437 & 0.0671 \\
      & $NDCG@K$ & 0.0207 & 0.0241 & 0.0595 & 0.0693 & \underline{0.1534} & 0.1780 & \underline{0.0433} & \underline{0.0544} \\
      \hline
    \multirow{3}{*}{\rotatebox{0}{CFM}} 
      & $Precision@K $ & \underline{0.0220} & \underline{0.0170} & 0.0420 & 0.0375 & 0.0739 & 0.0720 & \underline{0.0180} & \underline{0.0165} \\
      & $Recall@K$ & \underline{0.0615} & \underline{0.0740} & 0.0420 & 0.0613 & \underline{0.1111} & \underline{0.1997} & \underline{0.0482} & \underline{0.0891} \\
      & $NDCG@K$ & \underline{0.0487} & \underline{0.0480} & 0.0392 & 0.0583 & 0.1078 & \underline{0.1391} & 0.0349 & 0.0492 \\
      \Xhline{1.2pt}
    \multirow{3}{*}{\rotatebox{0}{CPA-LGC}} 
      & $Precision@K $ & \textbf{0.0449} & \textbf{0.0273} & \textbf{0.1012} & \textbf{0.0788} & \textbf{0.2177} & \textbf{0.1739} &\textbf{0.0360} & \textbf{0.0276} \\
      & $Recall@K$ & \textbf{0.0901} & \textbf{0.1053} &\textbf{0.1211} & \textbf{0.1725} & \textbf{0.1863} & \textbf{0.2745} & \textbf{0.0859} &\textbf{0.1286} \\
      & $NDCG@K$ & \textbf{0.0830} & \textbf{0.0880} & \textbf{0.1392} & \textbf{0.1532} & \textbf{0.2823} & \textbf{0.2892} & \textbf{0.0713} & \textbf{0.0859} \\
      \Xhline{1.2pt}
    
    \multirow{3}{*}{\rotatebox{0}{Gain}} 
      & $Precision@K $ & +104.09 \% & +60.59 \% & +49.93 \% & +64.17 \% & +136.37 \% & +141.20 \% & +100.00 \% & +67.27 \% \\
      & $Recall@K$  & +46.50 \% & +42.30 \% & +74.50 \% & +72.50 \% & +67.69 \% & +37.46 \% & +78.22 \% & +44.33 \% \\
      & $NDCG@K$  & +70.43 \% & +83.33 \% & +29.85 \% & +32.76 \% & +88.96 \% & +107.91 \% & +64.67 \% & +57.90\% \\
      \Xhline{1.2pt}
\label{table_MC_comparison}
  \end{tabular}
\end{table*}

Specifically, for the five GNN-based recommendation methods except for $\text{LightGCN}_\text{MC}$, we use only overall ratings since those five were originally designed by leveraging single ratings. In our study,  we additionally present a variant of LightGCN, dubbed $\text{LightGCN}_\text{MC}$, which applies LightGCN \cite{he2020lightgcn} to each of $C+1$ bipartite graphs constructed by the MC ratings and then concatenates the output representations of user/item nodes for the final prediction. The schematic overview of $\text{LightGCN}_\text{MC}$ is visualized in Figure \ref{fig:LightGCN-MC} of Appendix \ref{app:lightgcn_mc_detail}.

\noindent\textbf{Evaluation protocols.} We randomly select 70\% of the interactions of each user for the training set, the other 10\% for the validation set, and the remaining 20\% for the test set.
To evaluate the accuracy of top-$K$ MC recommendation, we use the precision,
recall, and normalized discounted cumulative gain (NDCG) as performance metrics, where $K$ is set to 5 and 10 by default. In the inference phase, we view user--item interactions in terms of the {\it overall} rating in the test set as positive and evaluate how well each method can rank the items in the test set higher than all unobserved items. We report the average of values obtained by performing the 10 independent evaluations for each measure.
\begin{table*}
\footnotesize
\caption{Performance comparison among \textsf{CPA-LGC} and benchmark GNN-based recommendation methods for the four benchmark datasets. Here, the best ($X$) and second-best performers ($Y$) are highlighted in bold and underline, respectively. The gain against the second performer is calculated by $\frac{X-Y}{Y}\times 100$ (\%).}
  \begin{tabular}{c l c c c c c c c c}
    \Xhline{1.2pt}
    \multirow{2}{*}{Method} &
    \multirow{2}{*}{Metric} &
      \multicolumn{2}{c}{TA} & 
      \multicolumn{2}{c}{YM} &
      \multicolumn{2}{c}{RB} &
      \multicolumn{2}{c}{YP}\\
      
    & & $K=5$ & $K=10$ & $K=5$ & $K=10$ & $K=5$ & $K=10$  & $K=5$ & $K=10$ \\
    \hline
    \multirow{3}{*}{\rotatebox{0}{GC-MC}} 
      & $Precision@K $ & 0.0060 & 0.0055 & 0.0603 & 0.0508 & 0.1543 & 0.1234 & 0.0208 & 0.0175 \\
      & $Recall@K$ & 0.0157 & 0.0284 & 0.0745 & 0.1246 & 0.1762 & 0.2547 & 0.0533 & 0.0895 \\ 						
      & $NDCG@K$ & 0.0112 & 0.0159 & 0.0820 & 0.0978 & 0.2232 & 0.2354 & 0.0409 & 0.0530 \\
      \hline
    \multirow{3}{*}{\rotatebox{0}{SpectralCF}} 
      & $Precision@K $ & 0.0015 & 0.0016 & 0.0594 & 0.0472 & 0.1655 & 0.1306 & 0.0086 & 0.0081 \\
      & $Recall@K$ & 0.0054 & 0.0111 & 0.0798 & 0.1202 & 0.1646 & 0.2424 & 0.0190 & 0.0351 \\
      & $NDCG@K$ & 0.0037 & 0.0058 & 0.0842 & 0.0963 & 0.2255 & 0.2339 & 0.0148 & 0.0204 \\
      \hline
    \multirow{3}{*}{\rotatebox{0}{NGCF}} 
      & $Precision@K $ & 0.0181 &	0.0119 & 0.0814 & 0.0609 & 0.1730 &	0.1380 & 0.0232 & 0.0188 \\
      & $Recall@K$ & 0.0475& 0.0646 &	0.1010 & 0.1156 & 0.1777 & 0.2648 & 0.0600 & 0.0985 \\
      & $NDCG@K$ & 0.0393 & 0.0454 & 0.1139 & 0.1277 & 0.2372	& 0.2534 & 0.0471 & 0.0600 \\
      \hline
    \multirow{3}{*}{\rotatebox{0}{DGCF}} 
      & $Precision@K $ & 0.0265 & 0.0168 & \underline{0.0809} & 0.0618 & 0.1635 & 0.1300 & 0.0223 & 0.0188 \\
      & $Recall@K$ & 0.0729 & 0.0911 & \underline{0.1020} & 0.1506 & 0.1596 & 0.2411 & 0.0494 & 0.0827 \\
      & $NDCG@K$ & 0.0603 & 0.0670 & \underline{0.1150} & 0.1275 & 0.2202 & 0.2296 & 0.0410 & 0.0519 \\
      \hline
    \multirow{3}{*}{\rotatebox{0}{LightGCN}} 
      & $Precision@K $ & 0.0267 & 0.0177 & 0.0771 & 0.0616 & 0.1732 & 0.1382 & 0.0235 & 0.0192 \\
      & $Recall@K$ & 0.0730 & 0.0929 & 0.0959 & 0.1533 & 0.1778 & 0.2622 & 0.0602 & 0.0987 \\
      & $NDCG@K$ & 0.0607 & 0.0671 & 0.1108 & 0.1278 & 0.2384 & 0.2512 & 0.0475 & 0.0603 \\
      \hline
    \multirow{3}{*}{\rotatebox{0}{$\text{LightGCN}_\text{MC}$}} 
      & $Precision@K $ & \underline{0.0283} & \underline{0.0211} & 0.0795 & \underline{0.0656} & \underline{0.1802} & \underline{0.1423} & \underline{0.0267} & \underline{0.0205} \\
      & $Recall@K$ & \underline{0.0799} & \underline{0.0953} & 0.0977 & \underline{0.1566} & \underline{0.1799} & \underline{0.2651} & \underline{0.0633} & \underline{0.1005} \\
      & $NDCG@K$ & \underline{0.0632} & \underline{0.0699} & 0.1122 & \underline{0.1301} & \underline{0.2423} & \underline{0.2566} & \underline{0.0520} & \underline{0.0625} \\
      \Xhline{1.2pt}
    \multirow{3}{*}{\rotatebox{0}{CPA-LGC}} 
      & $Precision@K $ & \textbf{0.0449} & \textbf{0.0273} & \textbf{0.1012} & \textbf{0.0788} & \textbf{0.2177} & \textbf{0.1739} &\textbf{0.0360} & \textbf{0.0276} \\
      & $Recall@K$ & \textbf{0.0901} & \textbf{0.1053} &\textbf{0.1211} & \textbf{0.1725} & \textbf{0.1863} & \textbf{0.2745} & \textbf{0.0859} &\textbf{0.1286} \\
      & $NDCG@K$ & \textbf{0.0830} & \textbf{0.0880} & \textbf{0.1392} & \textbf{0.1532} & \textbf{0.2823} & \textbf{0.2892} & \textbf{0.0713} & \textbf{0.0859} \\
      \Xhline{1.2pt}
    \multirow{3}{*}{\rotatebox{0}{Gain}} 
  & $Precision@K $ & +58.66\% & +29.38\% & +25.09\% & +20.12\% & +20.81\% & +22.21\% & +34.83\% & +34.63\% \\
  & $Recall@K$  & +12.77\% & +10.49\% & +18.73\% & +10.15\% & +3.04\% & +1.55\% & +35.70\% & +28.22\% \\
  & $NDCG@K$  & +31.33\% & +25.89\% & +21.04\% & +17.76\% & +16.51\% & +12.70\% & +37.12\% & +37.44\% \\

      \Xhline{1.2pt}

  \end{tabular}
\label{table_gnn}
\end{table*}

\noindent\textbf{Implementation details.} Unless otherwise stated, we set the dimensionality of the embedding, $d$, to $64$ for all models as in \cite{wang2019neural, he2020lightgcn}. The model parameters including UCP and IC embeddings in \textsf{CPA-LGC} are initialized with the Xavier method \cite{glorot2010understanding}. We use the Adam optimizer \cite{kingma2015adam}, where the mini-batch size is set to 1024. We use the best hyperparameters of competitors and \textsf{CPA-LGC}
obtained by extensive grid search on the validation set in the
following ranges: $\{1e^{-4},5e^{-4},1e^{-3},5e^{-3},1e^{-2}\}$ for the learning rate; $\{1e^{-5},1e^{-4},1e^{-3},1e^{-2}\}$ for the regularization strength $\lambda$; and $\{1,2,3,4,5\}$ for the number of GNN layers, $L$, in the six GNN-based competitors and \textsf{CPA-LGC}. In consequence, we set the hyperparameters as follows: learning rate$=1e^{-3}$; $\lambda = 1e^{-3}$; and $L = 1$ for YM and $L=3$ for other datasets. In \textsf{CPA-LGC}, to accentuate the importance of overall ratings, the edge weight associated with connections to criterion-item nodes for criterion 0 ({\it i.e.}, $w_{u,i^0}$ in Eqs. \eqref{lgc eq} and \eqref{criteria lgc eq}) is set as $\alpha$ while the edge weight for other existing edges is set as 1. The value of $\alpha$ is searched in range of $\{0.5, 1, 1.5, 2, 2.5\}$, and we set $\alpha = 1.5$ for all the datasets unless otherwise specified. In PairNorm, scaling parameter $s$ in Eq. \eqref{pairnorm} is set to $1$. Additionally, we exclude $f(\cdot)$ for the YP dataset as YP reveals the smallest $\gamma$ (see Table \ref{datasettable}), which represents the ratio of the number of overall ratings to the number of MC ratings, and a smaller value of $\gamma$ would lead to a less over-smoothness degree. We implemented \textsf{CPA-LGC} based on Recbole \cite{zhao2021recbole}, an open-sourced
recommendation library. All experiments are carried out with Intel (R) 12-Core (TM) i7-9700K CPUs @ 3.60 GHz and GPU of NVIDIA GeForce RTX 3080. Our source code is available at https://github.com/jindeok/CPA-LGC-Recbole.

\subsection{Results and Analysis}
\label{section 4.2}
In RQ1--RQ4, we provide experimental results on all datasets. For RQ5, we show here only the results on TA due to space limitations, since the results on other datasets showed similar tendencies to those on TA. We evaluate the performance in terms of the NDCG@10 in RQ3--RQ4. Additionally, we highlight the best and the second-best methods in each column of the following tables in bold and underline, respectively.

\noindent\textbf{RQ1: Comparison with five MC recommendation competitors.}
We validate the superiority of \textsf{CPA-LGC} over five MC recommendation competitors through extensive experiments on the four datasets. Table \ref{table_MC_comparison} shows the results of all MC recommendation competitors and \textsf{CPA-LGC}. Our findings are as follows:
\begin{enumerate}[label=(\roman*)]
    \item Expected but surprisingly, \textsf{CPA-LGC} \textit{significantly} and \textit{consistently} outperform all MC recommendation competitors on all datasets regardless of the metrics. Specifically, on TA, YM, RB, and YP, \textsf{CPA-LGC} outperforms best competitors by large margins by up to 104.09\%, 49.93\%, 136.37\%, and 100.00\% in terms of the Precision@5, respectively;
    \item Unlike two-stage approaches (UBM, DMCF, and AEMC) that predict MC ratings excluding overall ratings and then integrate them to infer overall ratings, CFM is a collective matrix factorization method, which robustly shows better results. It means that jointly predicting the overall rating and other MC ratings via CF can be effective for the top-$K$ MC recommendation;
    \item Deep neural network-based methods (ExtendedSAE, DMCF, and AEMC) show satisfactory performance in some cases. Specifically, ExtendedSAE exhibits superb results among the competitors on YM, which is the smallest dataset in terms of the numbers of users and items. However, ExtendedSAE faces an out-of-memory (OOM) problem in the largest dataset, YP, due to its high input/output dimension calculated as the product of the numbers of users and items in the dataset, causing a significant space complexity;
    \item Since the competitors do not use GNNs, they result in far inferior performance compared to \textsf{CPA-LGC} due to their inability to explicitly reflect the complex high-order connectivity in the embedding learning process, which could lead to suboptimal representations \cite{wang2019neural}.
\end{enumerate}

\noindent\textbf{RQ2: Comparison with six GNN-based recommendation competitors.}
We validate the superiority of \textsf{CPA-LGC} over six GNN-based recommendation competitors. Specifically, since there is no prior work exploring GNN-based MC recommendation, we use single ratings ({\it i.e.}, overall ratings) for the existing five GNN-based recommendation methods (GC-MC, SpectiralCF, NGCF, DGCF, and LightGCN), and we further implement a variant of LightGCN ($\text{LightGCN}_\text{MC}$), which is designed for leveraging the MC ratings. Table \ref{table_gnn} shows the results of all GNN-based recommendation competitors and \textsf{CPA-LGC}. Our findings are as follows: 
\begin{enumerate}[label=(\roman*)]
    \item Most importantly, our \textsf{CPA-LGC} also {\it significantly} and {\it consistently} outperforms other competing GNN-based methods on the four datasets, regardless of the metrics. Specifically, on TA, YM, RB, and YP, \textsf{CPA-LGC} outperforms the best competitors by up to 58.66\%, 25.09\%, 20.81\%, and 34.83\% in terms of the Precision@5, respectively;
    \item Among the five competitors using single ratings, LightGCN mostly performs best (with the only exception on YM in case of $K=5$) as it tends to exhibit state-of-the-art performance in wide fields of recommendation \cite{ran2022pm,choi2021lt,hirakawa2021cross,huang2021mixgcf,wu2022graph};
    \item A substantial improvement of $\text{LightGCN}_\text{MC}$ over LightGCN is observed on all the datasets, which implies that na\"ively incorporating MC rating information into graph convolution models is even beneficial in achieving potential gains;

    \item The accuracies of $\text{LightGCN}_\text{MC}$ are far below those of \textsf{CPA-LGC}. This means that capturing the collaborative signal on a bipartite graph constructed by each of MC ratings in a separate manner does not make full use of MC ratings as long as graph convolution is concerned;

    \item Comparing to the results in Table \ref{table_MC_comparison}, the six GNN-based approaches can still be effective against the five non-GNN methods, while mostly showing robust results over all datasets. This again validates our claim that it is beneficial to take advantage of GNNs for accurate top-$K$ MC recommendations.
\end{enumerate}

The above empirical results demonstrate the effectiveness of our MC expansion graph as well as our \textsf{CPA-LGC} that accommodates two new embeddings ({\it i.e.}, UCP embeddings and IC embeddings) for precisely grasping the criteria preference of each user. 

\definecolor{linecol1}{rgb}{1.0, 0.5, 0.0}
\definecolor{linecol2}{rgb}{0.1, 0.6, 0.0}
\definecolor{linecol3}{rgb}{0.2, 0.4, 0.8}
\definecolor{linecol4}{rgb}{0.2, 0.1, 0.7}
\begin{figure}[t!]
\pgfplotsset{footnotesize,samples=10}
\centering
\begin{tikzpicture}
\begin{axis}[
legend columns=-1,
legend entries={TA, YM, RB, YP},
legend to name=named,
xmax=5,xmin=1,ymin= 0,ymax=0.3,
xlabel=(a) Effect of $L$.,ylabel=NDCG@10, width = 3.3cm, height = 3.8cm,
xtick={1,2,3,4,5},ytick={0, 0.1, 0.2, 0.3}]
    \addplot+[color=linecol1] coordinates{(1,0.073) (2,0.086) (3,0.088) (4,0.088) (5,0.085) };
    \addplot+[color=linecol2] coordinates{(1,0.161) (2,0.158) (3,0.153) (4,0.148) (5,0.142) };
    \addplot+[color=linecol3] coordinates{(1,0.280) (2,0.278) (3,0.2892) (4,0.29) (5,0.288) };
    \addplot+[color=linecol4] coordinates{(1,0.055) (2,0.058) (3,0.063) (4,0.064) (5,0.060)};
\end{axis}
\end{tikzpicture}
\begin{tikzpicture}
\begin{axis}[
xmax=5,xmin=1,ymin= 0,ymax=0.3,
xlabel=(b) Effect of $d$., width = 3.3cm, height = 3.8cm, 
xtick={1, 2, 3, 4, 5},xticklabels = {32, 64, 128, 256, 512},ytick={0, 0.1, 0.2, 0.3}]
    \addplot+[color=linecol1] coordinates{(1,0.083) (2,0.086) (3,0.088) (4,0.088) (5,0.082) };
    \addplot+[color=linecol2] coordinates{(1,0.151) (2,0.153) (3,0.155) (4,0.160) (5,0.158) };
    \addplot+[color=linecol3] coordinates{(1,0.280) (2,0.289) (3,0.295) (4,0.299) (5,0.286) };
    \addplot+[color=linecol4] coordinates{(1,0.063) (2,0.070) (3,0.072) (4,0.074) (5,0.073) };

\end{axis}
\end{tikzpicture}
\begin{tikzpicture}
\begin{axis}[
xmax=2.5,xmin=0.5,ymin= 0,ymax=0.3,
xlabel=(c) Effect of $\alpha$., width = 3.3cm, height = 3.8cm,
xtick={0.5,1,1.5,2,2.5},ytick={0, 0.1, 0.2, 0.3}]
    \addplot+[color=linecol1] coordinates{(0.5,0.083) (1,0.086) (1.5,0.088) (2,0.088) (2.5,0.082) };
    \addplot+[color=linecol2] coordinates{(0.5,0.151) (1,0.151) (1.5,0.161) (2,0.151) (2.5,0.155) };
    \addplot+[color=linecol3] coordinates{(0.5,0.28) (1,0.28) (1.5,0.289) (2,0.278) (2.5,0.266) };
    \addplot+[color=linecol4] coordinates{(0.5,0.06) (1,0.06) (1.5,0.064) (2,0.063) (2.5,0.059) };
\end{axis}
\end{tikzpicture}
\vspace{-0.5\baselineskip}
\ref{named}
\caption{The effect of three hyperparameters on the accuracies of \textsf{CPA-LGC}.}
\label{hyper_plot}
\end{figure}
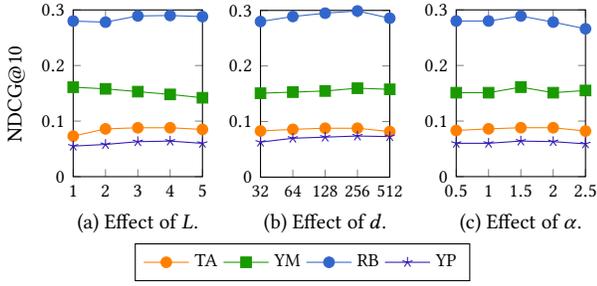

\noindent\textbf{RQ3: Hyperparameter sensitivity analysis.}
In Figure \ref{hyper_plot}, we investigate the impact of three key hyperparameters, including $L$, $d$, and $\alpha$ in \textsf{CPA-LGC}, on the recommendation accuracy.

 {\bf (Effect of $L$)} The number of GNN layers, $L$, decides the degree of exploitation of high-order connectivity among user nodes and criterion-item nodes. From Figure \ref{hyper_plot}a, except for YM, we observe that the recommendation accuracy in terms of the NDCG@10 steadily increases until $L$ reaches 3 and then gradually decreases. This implies that multi-layer LGC is indeed effective in most cases but stacking too many layers may intensify over-smoothing, thereby leading to performance degradation. On the other hand, for the YM dataset, the performance tends to monotonically decrease with $L$, which means that, due to the over-smoothing effect, it is recommended to use only direct neighbors in graph convolution.

{\bf (Effect of $d$)} As shown in Figure \ref{hyper_plot}b, the effect of the dimensionality $d$ of embeddings on the recommendation accuracy is generally observed to be positive  for all the datasets. However, increasing $d$ leads to high computation and overfitting problems \cite{domingos2012few}. The NDCG@10 slightly deteriorates when $d>256$, manifesting the importance of choosing an appropriate $d$ to improve the recommendation accuracy while maintaining the computational efficiency.

 {\bf (Effect of $\alpha$)} The parameter $\alpha$ controls the relative importance of overall ratings compared to MC ratings. From Figure \ref{hyper_plot}c, it is observed that the highest NDCG@10 is achieved at $\alpha=1.5$ regardless of datasets, but further increasing $\alpha$ deteriorates the recommendation accuracy. This implies that overemphasizing overall ratings in LGC may dilute the information acquired from user--item interactions on other criteria and harm the model's performance.

\begin{table}[t!]
    \small
    \centering
    \caption{Performance comparison among \textsf{CPA-LGC} and its three variants in terms of the NDCG@10. Here, the best and second performers are highlighted in bold and underline, respectively.}
    \begin{tabular}{lcccc}
    \toprule
    & \textbf{TA} & \textbf{YM} & \textbf{RB} & \textbf{YP} \\
    \midrule
    \textsf{CPA-LGC} & \textbf{0.088} & \textbf{0.153} & \textbf{0.289} & \underline{0.068}\\
    \textsf{CPA-LGC-MC} & 0.064 & 0.128 & 0.251 & 0.060 \\
    \textsf{CPA-LGC-c} & 0.067 & \underline{0.134} & 0.253 & 0.058 \\
    \textsf{CPA-LGC-f} & \underline{0.070} & 0.131 & \underline{0.259} & \textbf{0.072} \\
    \bottomrule
    \end{tabular}
    \label{ablation_table}
\end{table}

\noindent\textbf{RQ4: Ablation study.}
To analyze the contribution of each component in \textsf{CPA-LGC}, we conduct an ablation study in comparison with three variants depending on which sources are taken into account for designing the \textsf{CPA-LGC} architecture. The performance comparison among the four methods is presented in Table \ref{ablation_table} {\it w.r.t.} the NDCG@10 using four datasets.

\begin{itemize}
\item \textsf{CPA-LGC}: corresponds to the original \textsf{CPA-LGC} method without removing any components;
\item \textsf{CPA-LGC-MC}: uses user embeddings ${\bf e}_u^{(l)}$ and item embeddings ${\bf e}_{i^0}^{(l)}$ for criterion 0 in the LGC operation based on the graph construction only with overall ratings; 
\item \textsf{CPA-LGC-c}: removes UCP embeddings ${\bf p}_u^{(l)}$ and IC embeddings ${\bf p}_{i^c}^{(l)}$ in \textsf{CPA-LGC};
\item \textsf{CPA-LGC-f}: removes the layer-wise over-smoothing alleviation operation $f(\cdot)$.
\end{itemize}
Our observations are as follows: 
\begin{enumerate}[label=(\roman*)]
    \item The original \textsf{CPA-LGC} method always exhibits substantial gains over other variants, which demonstrates that each component plays a crucial role in the success of the proposed method;

    \item The performance gap between \textsf{CPA-LGC} and \textsf{CPA-LGC-MC} tends to be much higher than \textsf{CPA-LGC} and other variants except for YP. This finding indicates that our MC expansion graph is most influential in achieving high recommendation accuracies by precisely capturing the collaborative signal in high-order connectivities between user nodes and criterion-item nodes;

    \item  In comparison with \textsf{CPA-LGC-f}, using $f(\cdot)$ at each layer is shown to yield a positive contribution on the three datasets (TA, YM, and RB) but not on YP. Recall that $\gamma$ in Table \ref{datasettable} is the ratio of the number of MC ratings to 
the number of overall ratings, which signifies the tendency of how much the degree of each node is increased by constructing the MC expansion graph. Since YP has the smallest $\gamma$ ({\it i.e.}, $\gamma=3.1$ from Table \ref{datasettable}) out of all the datasets, over-smoothing may not be severe on YP and thus using $f(\cdot)$ is not beneficial in this case.
\end{enumerate}

\begin{figure}[t]
        \centering      
        \begin{subfigure}[c]{0.46\columnwidth}
                \includegraphics[width=0.99\columnwidth]{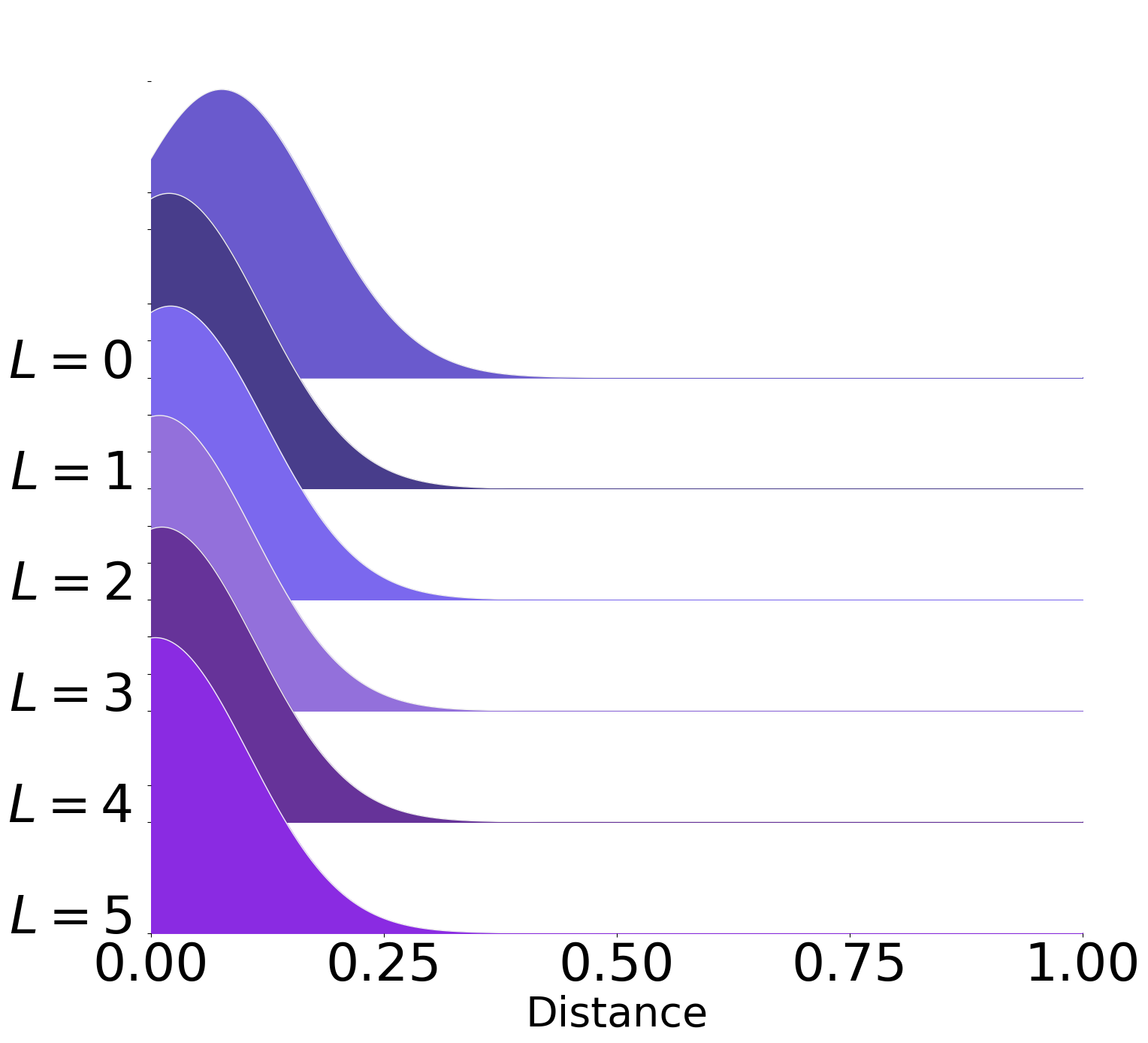}
                \caption{}
                \label{fig:mceg_dist}
        \end{subfigure}
        \begin{subfigure}[c]{0.46\columnwidth}
                \includegraphics[width=0.99\columnwidth]{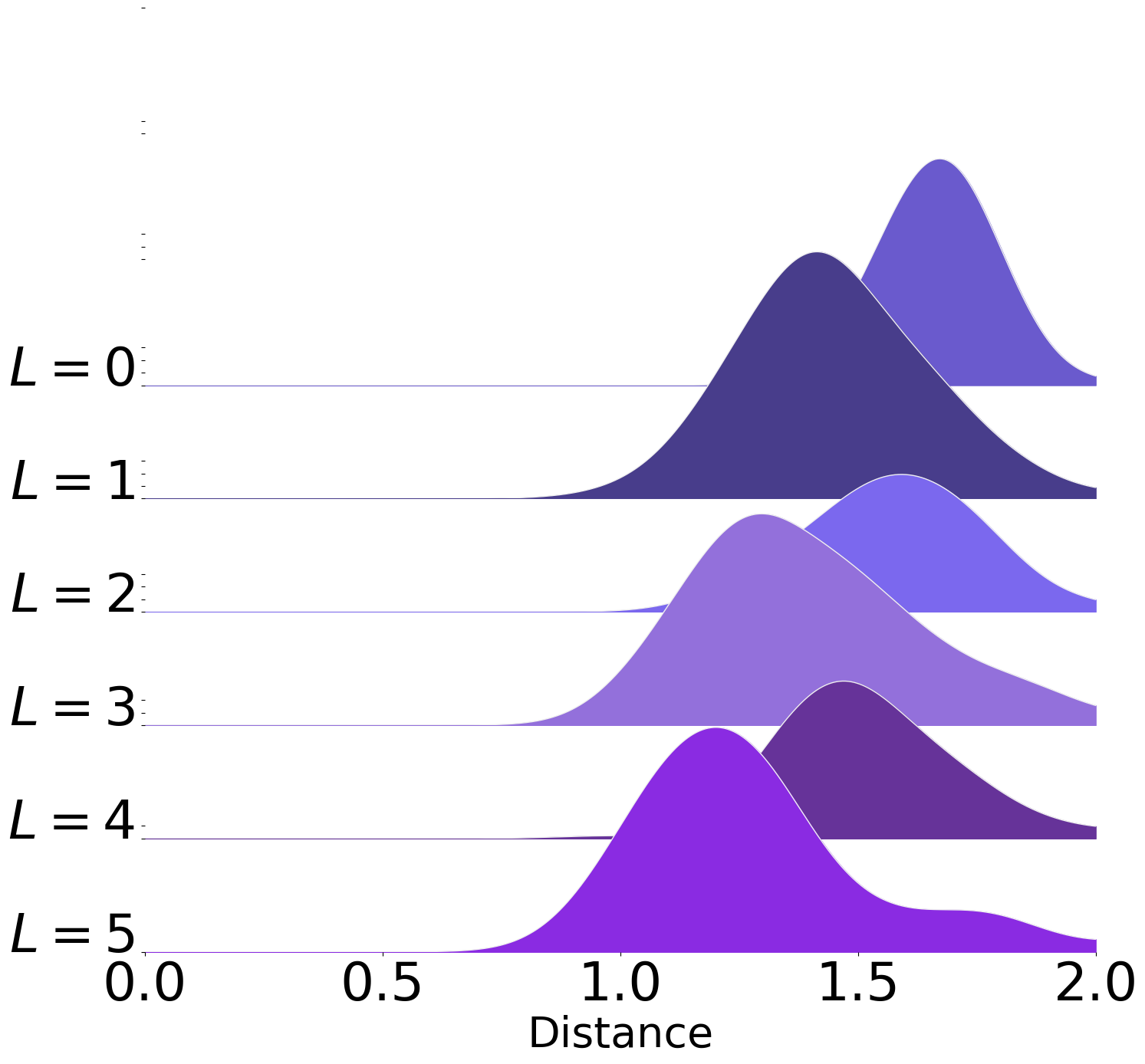}
                \caption{}
                \label{fig:mceg_pr_dist}
        \end{subfigure}
        \caption{Distribution of the Euclidean distances between node representations at each GNN layer on TA for when LGC is performed (a) without $f(\cdot)$ and (b) with $f(\cdot)$. In (b), we only show the distribution of the distances in the range of $[0,2]$ due to space limitations.}
        \label{dist_plot} 
\end{figure}

\noindent\textbf{RQ5: In-depth analysis of the smoothness.}
To validate that over-smoothing can be mitigated by layer-wisely employing the PairNorm operation $f(\cdot)$ in LGC, we analyze the distribution of the Euclidean distances between all node representations at each GNN layer. Figures \ref{fig:mceg_dist} and \ref{fig:mceg_pr_dist} visualize the distributions of such distances when LGC is performed without and with $f(\cdot)$ in the MC expansion graph, respectively, for the TA dataset. One can see that using $f(\cdot)$ at each layer increases the average of the pairwise squared distances between node representations, thereby alleviating potential over-smoothing in the MC expansion graph.
 
 \section{Related Work}
\label{section 5}
In this section, we review some representative methods in two broader fields of research, including 1) MC recommender systems and 2) GNN-based recommender systems.

\noindent\textbf{MC recommender systems.} Efforts have consistently been made to incorporate MC rating information in order to improve the accuracy of recommendations. As an early attempt, a support vector regression-based approach \cite{jannach2012accuracy} was presented to determine the relative importance of the individual criteria ratings. MSVD \cite{li2008improving} was developed by applying a multilinear singular value decomposition technique to capture implicit relations among users, items, and criteria. UBM \cite{zheng2019utility} was proposed by using a utility function in such a way that the user expectations are learned by learning-to-rank methods. CFM \cite{fan2021predicting} was designed by collectively using matrix factorization for MC rating matrices. DTTD \cite{chen2021deep} was developed by incorporating cross-domain knowledge along with side information. Moreover, due to the proliferation of deep learning, there has been a steady push to design DNN-based recommender systems. For example, ExtendedSAE \cite{tallapally2018user} was proposed to capture the relationship between each user's MC and overall ratings using the stacked auto-encoder. LatentMC \cite{li2019latent} was designed with variational auto-encoders to map user reviews into latent vectors, which constitute latent MC ratings. DMCF \cite{nassar2020novel} was developed for predicting MC ratings with a DNN while the predicted ratings are aggregated by another DNN. AEMC \cite{shambour2021deep} was proposed by deep autoencoders, which exploits the non-trivial,
nonlinear, and hidden relations between users with regard to preferences for criteria. However, the aforementioned methods may 1) be unable to explicitly learn the high-order proximity between users and items \cite{jannach2012accuracy,li2008improving,fan2021predicting,tallapally2018user,li2019latent,nassar2020novel,shambour2021deep}, 2) lack scalability \cite{tallapally2018user}, or 3) rely on side information such as user reviews \cite{li2019latent, chen2021deep}. Such limitations of the methods result in unsatisfactory recommendation performance and a lack of robustness to the varying availability of information.

\noindent{\bf GNN-based recommender systems.} GNN-based recommendation has been actively studied accordingly to boost the performance of recommendations. As the first attempt to apply GCN to a recommendation system, GC-MC \cite{berg2017graph} was proposed by taking into account matrix completion for recommender systems from the point of view of link prediction on graphs. PinSage \cite{ying2018graph} was developed by combining the random walk with graph convolution to perform a web-scale recommendation task. SpectralCF \cite{zheng2018spectral} was developed by performing the eigendecomposition on the adjacency matrix of a user--item bipartite graph, so as to discover possible connections between user--item pairs. DGCF \cite{wang2020disentangled} was introduced by separating user intent factors and generating disentangled representations. As one of the follow-up studies, by capturing the high-order collaborative signal existing in user-item interactions, NGCF \cite{wang2019neural} achieved superb performance compared to previous GNN-based approaches. However, extensive ablation studies in LightGCN \cite{he2020lightgcn} convinced that non-linear activation and feature transformation in NGCF are not effective in performing better recommendations; LightGCN has been shown to exhibit state-of-the-art performance in most general recommender systems by removing the two components from the GCN layers in NGCF. Yet, existing GNN-based approaches are limited to single rating recommendation scenarios and do not take into account the MC interactions between users and items.
 
\section{Conclusions and future work}
\label{section 6}
In this paper, we explored an open yet important problem of how to design MC recommender systems with the aid of GNNs. To tackle this challenge, we introduced \textsf{CPA-LGC}, a novel lightweight MC recommendation method that is capable of precisely capturing the criteria preference of users as well as the collaborative signal in MC ratings via LGC. Through extensive experiments on four MC recommendation datasets, we comprehensively demonstrated (a) the superiority of \textsf{CPA-LGC} over eleven benchmark methods, (b) the impact of tuning key hyperparameters in \textsf{CPA-LGC}, (c) the effectiveness of component in \textsf{CPA-LGC}, (d) the degree of over-smoothing alleviation using the PairNorm operation, and (e) the computational efficiency with a linear scaling in $|\mathcal{E}|$.
Potential avenues of our future research include the design of a new GNN architecture that can learn edge weights as trainable parameters along with node representations to automatically learn each user's preference.

\section*{Acknowledgments}
This work was supported by the National Research Foundation of Korea (NRF) grant funded by the Korea government (MSIT) (No. 2021R1A2C3004345, No. RS-2023-00220762).

\bibliographystyle{ACM-Reference-Format}
\balance
\bibliography{sample-base, 0.0.citation_list}

\newpage
\appendix
\newtheorem*{theorem*}{Theorem}
\section{Details of \textsf{CPA-LGC}}
\subsection{Proof of Theorem \ref{complexity_thm}}
\label{app:complexity_thm_proof}
\begin{theorem*}
The computational complexity of \textsf{CPA-LGC} is at most linear in $|\mathcal{E}|$.
\end{theorem*}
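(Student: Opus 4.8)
The plan is to decompose \textsf{CPA-LGC} into its four constituent operations---the light graph convolution of Eqs.~\eqref{lgc eq} and \eqref{criteria lgc eq}, the PairNorm normalization of Eq.~\eqref{pairnorm}, the layer combination of Eq.~\eqref{layer comb}, and the prediction/BPR-loss evaluation of Eqs.~\eqref{prediction}--\eqref{bpr_update}---and to bound the per-epoch cost of each in terms of $|\mathcal{E}|$, $|\mathcal{V}|$, the embedding dimension $d$, and the number of layers $L$, treating $d$ and $L$ as fixed constants.

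First I would handle the LGC propagation, which I expect to be the dominant term. Each layer of Eq.~\eqref{lgc eq} (and identically Eq.~\eqref{criteria lgc eq}) is a sparse aggregation in which the representation of every node is formed as a normalized sum over its neighbors. Summed over all nodes, the number of neighbor contributions equals the total degree $\sum_{v \in \mathcal{V}} |\mathcal{N}_v| = 2|\mathcal{E}|$, and each contribution is an operation on a $d$-dimensional vector, so one layer costs $O(|\mathcal{E}|\, d)$. Over $L$ layers the two embedding channels (user/criterion-item and UCP/IC) together cost $O(L\,|\mathcal{E}|\, d)$; the symmetric normalization weights in the denominators can be precomputed once, so they do not alter the asymptotics.

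Next I would bound the node-wise operations. PairNorm in Eq.~\eqref{pairnorm} computes a mean and a Frobenius norm over the representation matrix $\mathbf{E}^{(l)} \in \mathbb{R}^{|\mathcal{V}| \times d}$ and then rescales each row, costing $O(|\mathcal{V}|\, d)$ per layer; the layer combination in Eq.~\eqref{layer comb} similarly sums $L+1$ representations per node at cost $O(L\,|\mathcal{V}|\, d)$. The prediction dot-products in Eq.~\eqref{prediction} and the pairwise terms of the BPR objective in Eq.~\eqref{bpr_update} each cost $O(d)$ per sampled triple, and with the standard choice of $O(|\mathcal{E}|)$ triples per epoch this contributes $O(|\mathcal{E}|\, d)$.

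Finally I would combine the bounds. The key observation tying the node-wise terms back to the edge count is that the MC expansion graph has no isolated nodes---every user node and every criterion-item node participates in at least one positive interaction---so $|\mathcal{V}| \le 2|\mathcal{E}|$ and hence $|\mathcal{V}| = O(|\mathcal{E}|)$. Substituting this, every term collapses to $O(L\,|\mathcal{E}|\, d)$, which for fixed $L$ and $d$ is linear in $|\mathcal{E}|$. I expect the only real subtlety---rather than a genuine obstacle---to be the PairNorm and layer-combination steps, which are most naturally expressed per node rather than per edge; the bound $|\mathcal{V}| = O(|\mathcal{E}|)$ is exactly what is needed to absorb them into the edge count, and it is this step that justifies the ``at most linear in $|\mathcal{E}|$'' phrasing of the statement.
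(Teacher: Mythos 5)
Your proposal is correct and follows essentially the same route as the paper's proof: bound the per-layer LGC propagation by $O(|\mathcal{E}|d)$, the per-layer PairNorm by $O(|\mathcal{V}|d)$, and collapse $O(Ld(|\mathcal{V}|+|\mathcal{E}|))$ to $O(Ld|\mathcal{E}|)$. You are in fact slightly more careful than the paper, which cites prior work for the two component costs and silently uses $|\mathcal{V}| = O(|\mathcal{E}|)$, whereas you justify that step via the absence of isolated nodes and also account for the layer combination and BPR-loss costs that the paper's proof omits.
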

\begin{proof}
The feed-forward process of \textsf{CPA-LGC} is composed of two different $L$-layer LGCs including the PairNorm operation for each layer, which is used for generating not only user/criterion-item embeddings but also user UCP and IC embeddings. Since the graph convolution in \textsf{CPA-LGC} is equivalent to LGC in a weighted graph, the computational complexity of the feed-forward process of $L$-layer LGC is $\mathcal{O}(Ld|\mathcal{E}|)$ (refer to \cite{wu2021self} for more details). The $L$-layer PairNorm operation has a complexity of $\mathcal{O}(L|\mathcal{V}|d)$ \cite{zhao2019pairnorm}. Hence, the computational complexity of \textsf{CPA-LGC} is finally given by $\mathcal{O}(Ld(|\mathcal{V}|+|\mathcal{E}|)=\mathcal{O}(Ld|\mathcal{E}|)$, indicating a linear complexity in $|\mathcal{E}|$, which completes the proof of this theorem.
\end{proof}

\subsection{Matrix Form of \textsf{CPA-LGC}}
\label{app:matrix form}
We provide the matrix form of the LGC operation in \textsf{CPA-LGC} to facilitate implementation. Let $\mathbf{A}\in \mathbb{R}^{|\mathcal{V}|\times |\mathcal{V}|}$ be the adjacency matrix of the MC expansion graph, which is a weighted bipartite graph whose entry indicates edge weights. Then, we can calculate the normalized adjacency matrix as $\Tilde{\mathbf{A}} = \mathbf{D}^{-\frac{1}{2}}\mathbf{A}\mathbf{D}^{-\frac{1}{2}}$, where $\mathbf{D} \in \mathbb{R}^{|\mathcal{V}|\times |\mathcal{V}|} $ is the diagonal matrix whose each entry $\mathbf{D}_{ii}$ denotes the summation of nonzero entries in the $i$-th row vector of $\mathbf{A}$. Let denote the two embedding matrices at each GNN layer as $\mathbf{E}^{(l)} \in \mathbb{R}^{|\mathcal{V}|\times d}$ and $\mathbf{P}^{(l)} \in \mathbb{R}^{|\mathcal{V}|\times d}$. More precisely, $\mathbf{E}^{(l)}_{0 \cdots |\mathcal{U}|-1:}$ and $\mathbf{E}^{(l)}_{|\mathcal{U}| \cdots |\mathcal{V}|:}$ are the user embeddings and the criterion-item embeddings, respectively; and $\mathbf{P}^{(l)}_{0 \cdots |\mathcal{U}|:}$ and $\mathbf{P}^{(l)}_{|\mathcal{U}| \cdots |\mathcal{V}|:}$ are the UCP embeddings and the IC embeddings for each criterion-item node, respectively. Then, Eq. \eqref{lgc eq} and Eq. \eqref{criteria lgc eq} can be reformulated as follows:
\begin{equation}
\begin{aligned}
    \dot{\mathbf{E}}^{(l)} = f\left(\tilde{\mathbf{A}}\mathbf{E}^{(l-1)}\right)\\
    \dot{\mathbf{P}}^{(l)} = f\left(\tilde{\mathbf{A}}\mathbf{P}^{(l-1)}\right),
\end{aligned}
\end{equation}
 where $f(\cdot)$ is the layer-wise over-smoothing alleviation. Likewise, the layer-wise combination in Eq. \eqref{layer comb} can also be formulated as a matrix form:
\begin{equation}
\label{comb_matrix}
\begin{aligned}
   \dot{\mathbf{E}}^* = f\left(\frac{1}{L}\sum_{l=1}^{L}{\dot{\mathbf{E}}^{(l)}}\right)\\
\dot{\mathbf{P}}^* = f\left(\frac{1}{L}\sum_{l=1}^{L}{\dot{\mathbf{P}}^{(l)}}\right).
\end{aligned}
\end{equation}
Finally, the prediction can be calculated based on the summation of the two matrices in Eq. \eqref{comb_matrix} ({\it i.e.}, $\dot{\mathbf{E}}^* + \dot{\mathbf{P}}^*$), where a prediction $\hat{y}_{u,i^0}$ is calculated by the dot-product of the $u$-th row vector and the $i$-th row vector of $\dot{\mathbf{E}}^* + \dot{\mathbf{P}}^*$.

\subsection{Pseudocode of \textsf{CPA-LGC}}
\label{app:pseudo-code}

We summarize the training process of \textsf{CPA-LGC} in
Algorithm \ref{cpalgc_pseudo}.

\begin{algorithm}
\caption{\textsf{CPA-LGC}}
\label{cpalgc_pseudo}
\begin{algorithmic}[1]
  \renewcommand{\algorithmicrequire}{\textbf{Input:}}
  \renewcommand{\algorithmicensure}{\textbf{Output:}}
\REQUIRE MC ratings $\mathcal{R}_0 \times \mathcal{R}_1 \times ... \times \mathcal{R}_C$, set of users $\mathcal{I}$, set of items $\mathcal{U}$, maximum epoch $ep$, number of layers $L$
\ENSURE Updated $\mathbf{e}^{(0)}_u$, $\mathbf{e}^{(0)}_{i^c}$, and $\mathbf{p}^{(0)}_u$ for $u \in \mathcal{I}$ and $i^c \in \mathcal{N}_u$
\\
/* MC expansion graph construction */
\STATE Construct $\mathcal{G}$ based on $\mathcal{R}_0 \times \mathcal{R}_1 \times ... \times \mathcal{R}_C$
\\
/*\textsf{CPA-LGC}*/
\STATE Initialize $\mathbf{e}^{(0)}_u$, $\mathbf{p}^{(0)}_u$ for $u \in \mathcal{I}$, $\mathbf{e}^{(0)}_{i^c}$ for $i^c \in \mathcal{N}_u$, and $\mathbf{p}^{(0)}_{c}$ for $c \in \{0,1,2,\cdots,C\}$  
\FOR {$epoch \leftarrow 1$ to $ep$}
\FOR {$l\leftarrow1$ to $L$}
\STATE Obtain $\dot{\mathbf{e}}^{(l-1)}_u$, $\dot{\mathbf{e}}^{(l-1)}_{i^c}$, $\dot{\mathbf{p}}^{(l-1)}_u$, and $\dot{\mathbf{p}}^{(l-1)}_c$ via Eq. \eqref{pairnorm}
\STATE Obtain $\mathbf{e}^{(l)}_u$,$\mathbf{e}^{(l)}_{i^c}$,$\mathbf{p}^{(l)}_u$, and $\mathbf{p}^{(l)}_c$ via Eqs. \eqref{lgc eq} and \eqref{criteria lgc eq}
\ENDFOR
\STATE Obtain $\mathbf{e}^{*}_u$, $\mathbf{e}^{*}_{i^c}$,$\mathbf{p}^{*}_u$, and $\mathbf{p}^{*}_c$ via Eq. \eqref{layer comb}
\STATE Calculate $\hat{y}_{u,i^c}$ via Eq. \eqref{prediction}
\STATE Update $\mathbf{e}^{(0)}_u$, $\mathbf{e}^{(0)}_{i^c}$, and $\mathbf{p}^{(0)}_u$ via Eq. \eqref{bpr_update}
\ENDFOR

\RETURN Updated $\mathbf{e}^{(0)}_u$, $\mathbf{e}^{(0)}_{i^c}$, and $\mathbf{p}^{(0)}_u$ for $u \in \mathcal{I}$ and $i^c \in \mathcal{N}_u$
\end{algorithmic}
\end{algorithm}
\begin{figure}[t]
        \centering
        \begin{subfigure}[c]{0.4\columnwidth}
                \includegraphics[width=0.95\columnwidth]{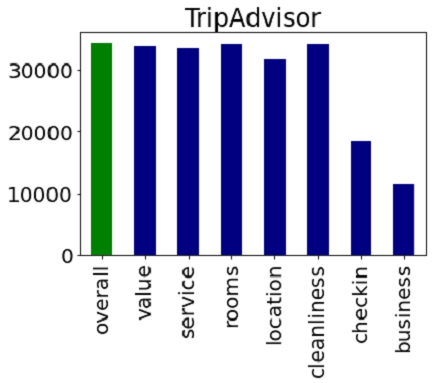}
                \caption{TA}
                \label{fig:ta_rating}
        \end{subfigure}        
        \begin{subfigure}[c]{0.4\columnwidth}
                \includegraphics[width=0.95\columnwidth]{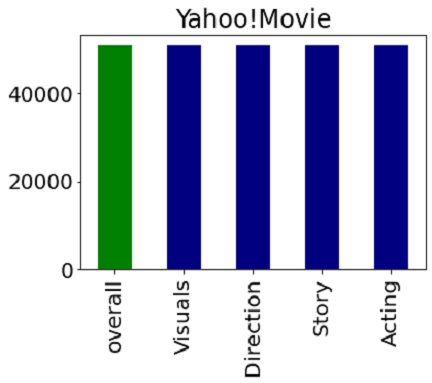}
                \caption{YM}
                \label{fig:ym_rating}
        \end{subfigure}
        \begin{subfigure}[c]{0.4\columnwidth}
                \includegraphics[width=0.95\columnwidth]{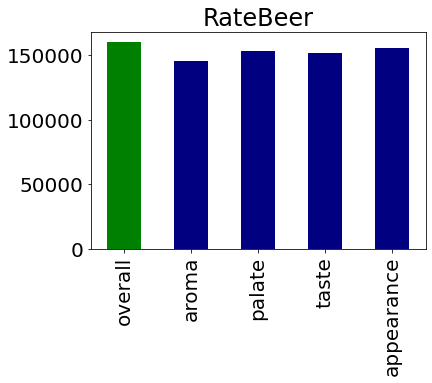}
                \caption{RB}
                \label{fig:rb_rating}
        \end{subfigure}
        \begin{subfigure}[c]{0.4\columnwidth}
                \includegraphics[width=0.95\columnwidth]{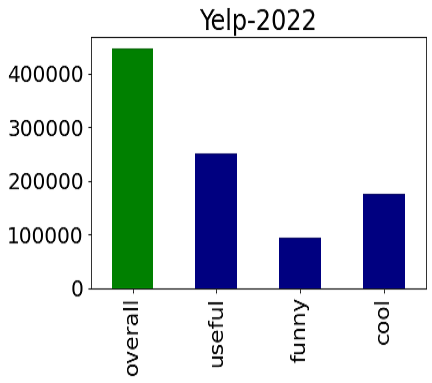}
                \caption{YP}
                \label{fig:yp_rating}
        \end{subfigure}
        \caption{Number of ratings for each criterion in the four datasets.}
        \label{data_ratings_plot} 
\end{figure}



\section{Details of experimental settings}
\subsection{Dataset description}
\label{app_dataset desc}

We describe the details of the datasets used in our experiments. The number of ratings for each criterion shown in Figure \ref{data_ratings_plot}.

\noindent\textbf{TripAdvisor (TA)}: The TA dataset, released by \cite{wang2011latent}, comprises hotel rating information, including an overall rating as well as ratings for seven comprehensive criteria: \textit{business}, \textit{check-in quality}, \textit{cleanliness}, \textit{location}, \textit{rooms}, \textit{service}, and \textit{value}. The ratings are on a scale of 1 to 5 for all criteria.

\noindent\textbf{Yahoo!Movie (YM)}: The TM dataset, first introduced by \cite{jannach2014leveraging}, comprises movie rating information, including an overall rating as well as ratings for four specific criteria: \textit{story}, \textit{acting}, \textit{direction}, and \textit{visuals}. The ratings are on a scale of 1 to 5 for all criteria.

\noindent\textbf{RateBeer (RB)}: The RB dataset, released by \cite{mcauley2012learning, mcauley2013amateurs}, comprises beer rating information, including an overall rating as well as ratings for four specific criteria: \textit{appearance}, \textit{aroma}, \textit{taste}, and \textit{palate}. The ratings range from 1 to 5 (\textit{appearance} and \textit{palate}), 1 to 10 (\textit{aroma} and \textit{taste}), and 1 to 20 (overall).

\noindent\textbf{Yelp-2022 (YP)}: The YP dataset is the most recent version of the Yelp dataset (https://www.yelp.com/dataset). It contains information on MC interactions, including the number of votes for three criteria \textit{cool}, \textit{funny}, and \textit{useful}, in addition to an overall rating on a scale of 1 to 5.

 For the graph construction of each dataset, edges were included in the corresponding graph if the ratings surpass the median value of the rating range, with the exception of Yelp-2022. For Yelp-2022, edges were included if there is at least one vote present for each criterion. 

\subsection{Details of $\text{LightGCN}_\text{MC}$}
\label{app:lightgcn_mc_detail}
\setcounter{figure}{9}
\begin{figure}[t]
    \centering
    \includegraphics[width=0.47\textwidth]{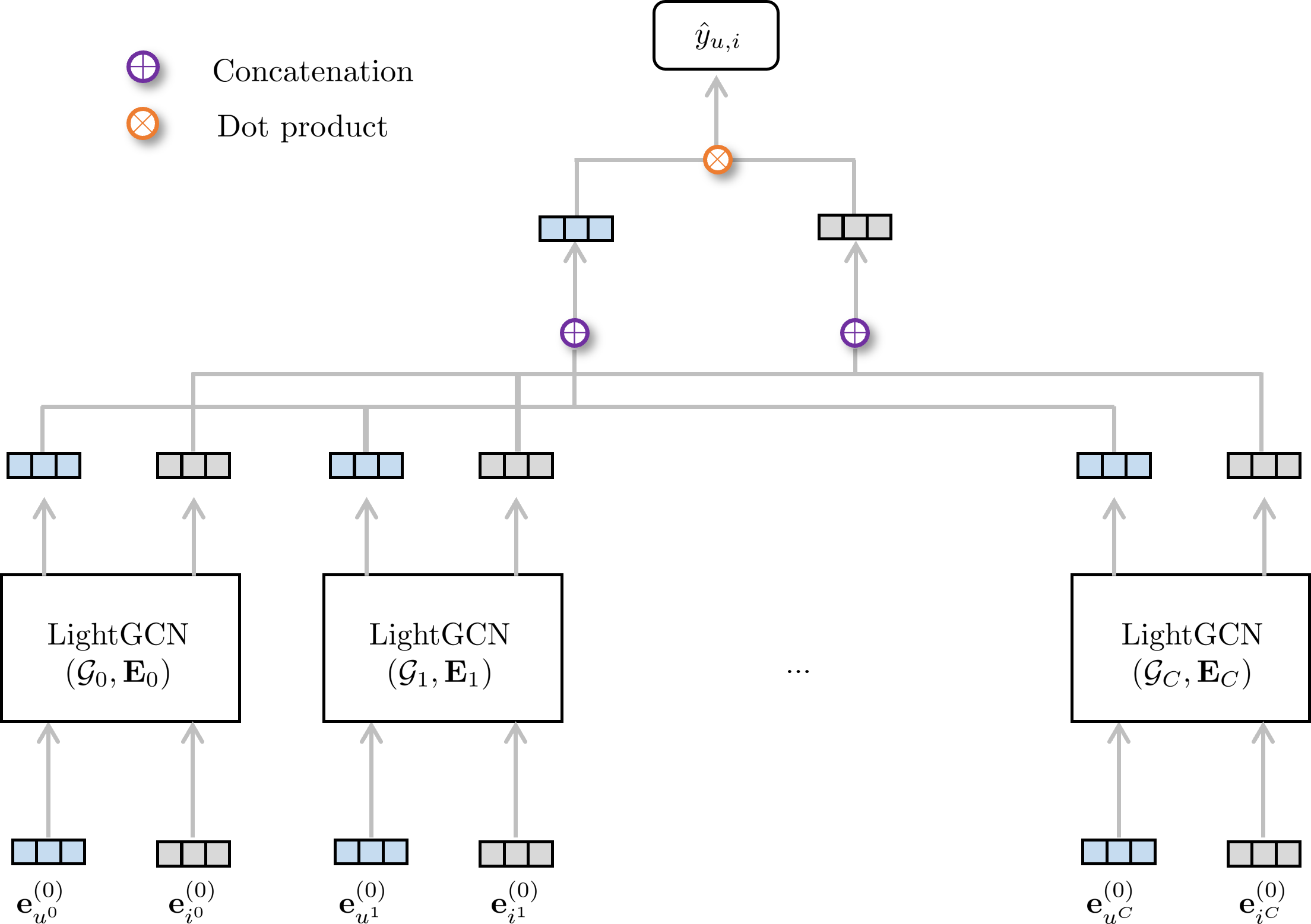}
    \caption{The schematic overview of $\text{LightGCN}_\text{MC}$.}
    \label{fig:LightGCN-MC}
\end{figure}
Figure \ref{fig:LightGCN-MC} describes the schematic overview of $\text{LightGCN}_\text{MC}$. Specifically, given a bipartite graph $\mathcal{G}_c$ and its node embedding matrix $\mathbf{E}_c$ for criterion $c \in \{0,1,\cdots,C\}$, $\text{LightGCN}_\text{MC}$ separately performs LightGCN on $C+1$ different graphs, denoted as $\mathcal{G}_0, \mathcal{G}_1, \cdots, \mathcal{G}_C$, and $\mathbf{E}_0, \mathbf{E}_1, \cdots, \mathbf{E}_C$. Then, the output representations of user/item nodes for each criterion are aggregated by concatenation for the final prediction.

\subsection{Quantitative Analysis of Over-smoothing}
\label{app:distance}

Table \ref{apsd_table} shows the average pairwise Euclidean distance of node representations at each GNN layer on the TA dataset  when \textsf{CPA-LGC} is performed with and without the over-smoothing alleviation operation $f(\cdot)$. It is seen that the average pairwise distance in the case of using $f(\cdot)$ is consistently larger than its counterpart ({\it i.e.}, the case of not using $f(\cdot)$) over all layers.
\begin{table}[]
\small
\caption{The average pairwise Euclidean distance of node representations at each layer on TA when \textsf{CPA-LGC} is performed with and without the over-smoothing alleviation operation $f(\cdot)$.}
\begin{tabular}{cllllll}
\hline 
\multicolumn{1}{c}{Layer} & \multicolumn{1}{c}{0} & \multicolumn{1}{c}{1} & \multicolumn{1}{c}{2} & \multicolumn{1}{c}{3} & \multicolumn{1}{c}{4} & \multicolumn{1}{c}{5} \\ \hline
\begin{tabular}[c]{@{}c@{}}\textsf{CPA-LGC} w/o $f(\cdot)$\end{tabular}      & 0.092                 & 0.043                 & 0.031                 & 0.027                 & 0.019                 & 0.014                 \\ \hline
\begin{tabular}[c]{@{}c@{}}\textsf{CPA-LGC} w/ $f(\cdot)$\end{tabular} & 2.061                 & 1.414                 & 2.546                 & 1.447                 & 2.420                  & 1.369                 \\ \hline
\end{tabular}

\label{apsd_table}
\end{table}
\begin{figure}
\setcounter{figure}{7}
\begin{tikzpicture}
    \begin{axis}[        xlabel=$|\mathcal{E}|$,        ylabel= Time (s), grid=major, grid style={dashed},  width = 7cm, height = 5cm , legend style={at={(0.55,0.2)},anchor=west}]
        \addplot [color=blue, mark=square, legend = \textsf{CPA-LGC}]
            coordinates {
                (10000,0.03) (50000,0.12) (100000,0.24) (500000,1.32) (1000000,2.73) (2000000, 4.89) (3000000,7.52)
            };
        \addplot [ dashed, color=red,            mark=triangle, legend = $O(|\mathcal{E}|)$ ]
            coordinates {
                (10000,0.03) (50000,0.12) (100000,0.24) (500000,1.2) (1000000,2.4) (2000000, 4.8) (3000000,7.2)
            };
    \addlegendentry{\textsf{CPA-LGC}}
    \addlegendentry{$O(|\mathcal{E}|)$}
    \end{axis}    
\end{tikzpicture}
\caption{Compuational efficiency of \textsf{CPA-LGC}. The execution time is measured on one epoch for each experiment. Here, a red dashed line indicates a linear scaling in $|\mathcal{E}|$ obtained from Theorem \ref{complexity_thm}.}
\label{scalabilityplot}
\end{figure}

\subsection{Computational Efficiency of \textsf{CPA-LGC}}
\label{app:scalability}

Figure \ref{scalabilityplot} shows the execution time for one epoch training on \textsf{CPA-LGC}. It is observed that the execution time is almost linear with the number of edges, $|\mathcal{E}|$, in the MC expansion graph. Thus, our empirical result validates the theoretical analysis in Theorem \ref{complexity_thm}.

\begin{figure}
\begin{tikzpicture}
\begin{axis}[    ybar,    width=6.7cm, height=3.5cm,   ymin=0.05,    ymax=0.09,    xtick={1,2,3,4,5,6,7,8},      ylabel={NDCG@10},    xlabel={\# of criteria},    bar width=0.15cm,    grid=major,    grid style={dashed,gray!30},]
\addplot coordinates {
    (1, 0.064)
    (2, 0.066)
    (3, 0.071)
    (4, 0.074)
    (5, 0.078)
    (6, 0.085)
    (7, 0.087)
    (8, 0.088)
};

\end{axis}
\end{tikzpicture}
\caption{Performance comparison of \textsf{CPA-LGC} according to the different number of criteria used in TA. Here, using 1 criterion means that only overall ratings are used as a single rating.}
\label{criteria_ta_exp}
\end{figure}

\subsection{Effect of the Number of Criteria}
To investigate the effect of the number of criteria on the performance, we perform an additional ablation study by varying the number of criteria. We use the TA dataset for the experiment. As shown in Figure \ref{criteria_ta_exp}, the recommendation accuracy tends to monotonically increase with the number of criteria. The experimental result demonstrates that MC ratings indeed contain the informative collaborative signal as long as GNNs are concerned, thus resulting in accurate recommendations.









\end{document}